\algrenewcommand\algorithmicrequire{\textbf{Input:}}
\algrenewcommand\algorithmicensure{\textbf{Output:}}
\newtheorem{theorem}{Theorem}
\newtheorem{lemma}[theorem]{Lemma}
\newtheorem{definition}[theorem]{Definition}
\newcommand{\inAppendix}{in the appendix}
\begin{document}

\copyrightyear{2024}
\copyrightclause{Copyright for this paper by its authors.
  Use permitted under Creative Commons License Attribution 4.0
  International (CC BY 4.0).}

\conference{This is the extended version of a paper presented at DL 2024: 37th International Workshop on Description Logics.}

\title{The Precise Complexity of Reasoning in ALC with Omega-Admissible Concrete Domains (Extended Version)}[The Precise Complexity of Reasoning in $\mathcal{ALC}$ with $\omega$-Admissible Concrete Domains (Extended Version)]
%

\author[1]{Stefan {Borgwardt}}[%
  orcid=0000-0003-0924-8478,
  email=stefan.borgwardt@tu-dresden.de,
  url=https://lat.inf.tu-dresden.de/~stefborg,
]
\author[1,2]{Filippo {De Bortoli}}[%
  orcid=0000-0002-8623-6465,
  email=filippo.de_bortoli@tu-dresden.de,
  url=https://lat.inf.tu-dresden.de/~debortoli,
]
\author[3]{Patrick {Koopmann}}[%
  orcid=0000-0001-5999-2583,
  email=p.k.koopmann@vu.nl,
  url=https://pkoopmann.github.io,
]

\address[1]{%
  TU Dresden, Institute of Theoretical Computer Science \\
  Dresden, Germany
}
\address[2]{%
  Center for Scalable Data Analytics and Artificial Intelligence (ScaDS.AI) \\
  Dresden/Leipzig, Germany
}
\address[3]{%
  Vrije Universiteit Amsterdam, Department of Computer Science \\
  Amsterdam, The Netherlands
}

\begin{abstract}
  Concrete domains have been introduced in the context of Description Logics to allow references to qualitative and quantitative values.
  In particular, the class of $\omega$-admissible concrete domains, which includes Allen's interval algebra, the region connection calculus (RCC8), and the rational numbers with ordering and equality, has been shown to yield extensions of \ALC for which concept satisfiability w.r.t.\ a general TBox is decidable.
  In this paper, we present an algorithm based on type elimination and use it to show that deciding the consistency of an \ALCD ontology is \ExpTime-complete if the concrete domain $\cDom$ is $\omega$-admissible and its constraint satisfaction problem is decidable in exponential time.
  While this allows us to reason with concept and role assertions, we also investigate \emph{feature assertions} $f(a,c)$ that can specify a constant~$c$ as the value of a feature~$f$ for an individual~$a$.
  We show that, under conditions satisfied by all known $\omega$-admissible domains, we can add feature assertions without affecting the complexity.
\end{abstract}

\begin{keywords}
  Description Logics \sep
  Concrete Domains \sep
  Reasoning \sep
  Complexity \sep
  Type Elimination
\end{keywords}

\maketitle

\section{Introduction}
\label{sec:intro}

Reasoning about numerical attributes of objects is a core requirement of many applications.
For this reason, \emph{Description Logics (DLs)} that integrate reasoning over an abstract domain of knowledge with references to values drawn from a \emph{concrete domain} have already been investigated for over 30 years~\cite{BaHa91,DBLP:phd/dnb/Lutz02,DBLP:conf/aiml/Lutz02,DBLP:conf/ijcai/BaaderBL05,LuMi07,DBLP:conf/ecai/ArtaleRK12,CaTu16,LaOrSi20,BaRy22,DeQu23a}.
In this setting, DLs are extended by \emph{concrete features} that are interpreted as partial functions mapping abstract domain elements to concrete values, such as the feature $\ex{diastolic}$ that describes the diastolic blood pressure of a person.
Using \emph{concrete domain restrictions} we can then describe constraints over these features and state, e.g.\ that all patients have a diastolic blood pressure that is lower than their systolic blood pressure, by adding the concept inclusion $\ex{Patient} \sqsubseteq\exists\ex{diastolic},\ex{systolic}.{<}$ to the ontology.
More interestingly, we can use \emph{feature paths} to compare the feature values of different individuals: the inclusion $\top\sqsubseteq\forall\ex{hasChild}\,\ex{age},\ex{age}.{<}$ states that children are always younger than their parents.

Unfortunately, dealing with feature paths in DLs is very challenging and often leads to undecidability~\cite{DBLP:phd/dnb/Lutz02,BaRy20,BaRy22}, which is why restrictive conditions on the concrete domains were introduced to regain decidability.
An approach inspired from research on constraint satisfaction is based on \emph{$\omega$-admissibility}~\cite{LuMi07}, which requires certain compositionality properties for finite and countable sets of constraints.
This enables the composition of full models from local solutions of sets of concrete domain constraints.
Being quite restrictive, at first only two examples of $\omega$-admissible concrete domains were known, namely Allen's interval algebra and the region connection calculus (RCC8).
Recently, the conditions of $\omega$-admissibility were investigated in more detail, and it was shown that $\omega$-admissible concrete domains can be obtained from \emph{finitely bounded homogeneous structures}~\cite{BaRy22}.
This class includes the rational numbers $\stru{Q} := (\mathbb{Q},<,=,>)$ with ordering and equality, and is closed under union and certain types of product~\cite{BaRy22}.

Despite these extensive investigations, the precise complexity of reasoning using description logics with concrete domains has been established only for a few special cases so far~\cite{DBLP:phd/dnb/Lutz02,CaTu16,LaOrSi20,DeQu23a}.
These results have been obtained by automata-based techniques, and yield decidability or tight complexity bounds for cases where the concrete domain is not $\omega$-admissible, such as the ordered integers $(\mathbb{Z},<,=,>)$~\cite{LaOrSi20} or the set of strings over a finite alphabet with a prefix relation~\cite{DeQu23a}.
In contrast, the goal of this paper is to derive a tight complexity bound for ontology consistency in the extension of \ALC with any $\omega$-admissible concrete domain.
Using an algorithm based on type elimination, we establish that this decision problem is \ExpTime-complete under the assumption that the concrete domain $\cDom$ is $\omega$-admissible and that its \emph{constraint satisfaction problem} (CSP) is decidable in exponential time.
The latter condition holds for all finitely bounded homogeneous structures (including the three examples from above), since their CSP is guaranteed to be in NP and in some cases, e.g. for $\stru{Q}$, is decidable in polynomial time.

Additionally, we consider \emph{predicate} and \emph{feature assertions}.
These allow us to constrain values of features associated to named individuals, e.g.\ to write ${<}(\ex{age}(\ex{mary}),\ex{age}(\ex{bob}))$ to state that Mary is younger than Bob, or to specify a constant value of a feature for a named individual, e.g.\ using $\ex{systolic}(\ex{mary}, 122)$ to state that Mary's systolic blood pressure equals~$122$.
We prove that consistency in the presence of predicate assertions can always be reduced in polynomial time to consistency without predicate assertions.
For feature assertions, we show that this holds if we consider \emph{homogeneous} $\omega$-admissible concrete domains, which again is the case for the three concrete domains from above.
Finally, we show that supporting feature assertions is equivalent to supporting singleton predicates of the form~$=_c$ for a constant~$c$ in the concrete domain.

Omitted proofs of the results presented in the text can be found \inAppendix.
\section{Preliminaries}
\label{sec:preliminaries}

\mypar{Concrete Domains}
As usual for DLs, we adopt the term \emph{concrete domain} to refer to a relational structure $\cDom=(D,P_1^D,P_2^D,\dots)$ over a non-empty, countable relational signature $\{P_1,P_2,\dots\}$, where $D$ is a non-empty set, and each predicate~$P$ has an associated arity $k \in \mathbb{N}$ and is interpreted by a relation $P^D \subseteq D^k$.
An example is the structure $\stru{Q} := (\mathbb{Q},{<},{=},{>})$ over the rational numbers~$\mathbb{Q}$ with standard binary order and equality relations.
Given a countably infinite set~$V$ of variables, a \emph{constraint system} over $V$ is a set~$\cs$ of \emph{constraints} $P(v_1,\dotsc,v_k)$, where $v_1,\dotsc,v_k \in V$ and $P$ is a $k$-ary predicate of $\cDom$.
We denote by $V(\cs)$ the set of variables that occur in~$\cs$.
The constraint system $\cs$ is \emph{satisfiable} if there is a \emph{homomorphism} $h \colon V(\cs) \to D$ that satisfies every constraint in $\cs$, i.e.\ $P(v_1,\dotsc,v_k) \in \cs$ implies $(h(v_1),\dotsc,h(v_k)) \in P^D$.
We then call $h$ a \emph{solution} of $\cs$.
The \emph{constraint satisfaction problem} for $\cDom$, denoted $\CSP(\cDom)$, is the decision problem asking whether a finite constraint system~$\cs$ over~$\cDom$ is satisfiable.
The problem $\CSP(\stru{Q})$ is in P, since satisfiability can be reduced to ${<}$-cycle detection; for example, the 3-clique $\{ x_1 < x_2,\ x_2 < x_3,\ x_3 < x_1 \}$ (using infix notation for~$<$) is unsatisfiable over~$\stru{Q}$.

To ensure that reasoning in the extension of \ALC with concrete domain restrictions is decidable, we impose further properties on~$\cDom$ regarding its relations and the compositionality of its CSP for finite and countable constraint systems.
We say that $\cDom$ is a \emph{patchwork} if it satisfies the following conditions:\footnote{\cite{LuMi07} originally used only \JEPD (jointly exhaustive, pairwise disjoint) and \AP (amalgamation property). \JD (jointly diagonal) was later added by~\cite{BaRy22}.}
\begin{description}
  \item[\JEPD] for all $k \ge 1$, either $\cDom$ has no $k$-ary relation, or $D^k$ is partitioned by all $k$-ary relations;
  \item[\JD] there is a quantifier-free, equality-free first-order formula over the signature of $\cDom$ that defines the equality relation~$=$ between two elements of $\cDom$;
  \item[\AP] if $\ocs$, $\cs$ are constraint systems and $P(v_1,\dotsc,v_k) \in \ocs \;\text{iff}\; P(v_1,\dotsc,v_k) \in \cs$ holds for all $v_1,\dotsc,v_k \in V(\ocs)\cap V(\cs)$ and all $k$-ary predicates $P$ over $\cDom$, then $\ocs$ and $\cs$ are satisfiable iff $\ocs \cup \cs$ is satisfiable.
\end{description}
If $\cDom$ is a patchwork, we call a constraint system $\cs$ \emph{complete} if, for all $k \in \mathbb{N}$, either $\cDom$ has no $k$-ary predicates, or for all $v_1,\dotsc,v_k \in V(\cs)$ there is exactly one $k$-ary predicate~$P$ over $\cDom$ such that $P(v_1,\dotsc,v_k) \in \cs$.
The concrete domain $\cDom$ is \emph{homomorphism $\omega$-compact} if a countable constraint system $\cs$ over~$\cDom$ is satisfiable whenever every finite constraint system $\cs' \subseteq \cs$ is satisfiable.
We now introduce our definition of \expadm concrete domains, which differs from the definitions of $\omega$-admissibility~\cite{LuMi07,BaRy22} in that we require $\CSP(\cDom)$ to be decidable in exponential time instead of simply decidable.

\begin{definition}
  \label{dfn:omega-admissible}
  A concrete domain $\cDom$ is \emph{\expadm}
  if
  \begin{itemize}
    \item $\cDom$ has a finite signature,
    \item $\cDom$ is a patchwork,
    \item $\cDom$ is homomorphism $\omega$-compact, and
    \item $\CSP(\cDom)$ is decidable in exponential time.
  \end{itemize}
\end{definition}
Requiring the signature of $\cDom$ to be finite is necessary to ensure decidability of \ALCD~\cite{BaRy22}.
In~\Cref{sec:assertions}, we will also consider concrete domains~$\cDom$ that are \emph{homogeneous}, that is, every isomorphism between finite substructures of~$\cDom$ can be extended to an isomorphism from~$\cDom$ to itself.
All properties we have defined here are satisfied by the three examples of Allen's interval relations, RCC8, and $\stru{Q}$, as they are \emph{finitely bounded} homogeneous structures, which are $\omega$-admissible and have CSPs whose complexity is at most NP (see~\cite{BaRy22} for details).

\mypar{Ontologies with Concrete Domain Constraints}
We assume the reader to be familiar with the standard description logic \ALC~\cite{DBLP:books/daglib/0041477}.
To use a concrete domain~$\cDom$ in \ALC axioms, we introduce a set of \emph{concrete features}~$\NF$, where each $f\in \NF$ is interpreted as a \emph{partial} function $f^\Imc\colon\Delta^\Imc\to D$ in an interpretation~\Imc.
A \emph{feature path} is of the form $r_1\dots r_nf$, where $r_1,\dots,r_n\in \NR$ are role names and $f\in \NF$.
The semantics of such a path is given by a function
\[
  (r_1\dots r_nf)^\Imc(d) \coloneqq
  \big\{ f^\Imc(e) \mid (d,e)\in r_1^\Imc\circ\dots\circ r_n^\Imc \;\text{and}\; f^\Imc(e) \;\text{is defined} \big\}
\]
that assigns to each domain element the set of all $f$-values of elements~$e$ reachable via the role chain $r_1\dots r_n$ (which may be empty if there is no such~$e$, or if $f^\Imc(e)$ is always undefined).
In a slight abuse of notation, we allow the case where $n=0$, i.e.\ $f$ can be seen as both a feature and a feature path, with slightly different, but equivalent semantics (that is, a partial function vs.\ a set-valued function that may produce either a singleton set or the empty set).
\emph{\ALCD concepts} are defined similarly to \ALC concepts, but can additionally use the following concept constructor: A \emph{concrete domain restriction} (or simply \emph{CD-restriction}) is of the form $\exists p_1,\dots,p_k.P$ or $\forall p_1,\dots,p_k.P$, where $p_1,\dots,p_k$ are feature paths and $P$ is a $k$-ary predicate, with the semantics
\begin{align*}
  (\exists p_1,\dotsc,p_k.P)^\Imc & \coloneqq \big\{ d\in\Delta^\Imc \mid \text{there are}\; c_i\in p_i^\Imc(d) \;\text{for}\; i = 1,\dotsc,k \;\text{s.t.}\; (c_1,\dots,c_k)\in P^D \big\}, \\
  (\forall p_1,\dots,p_k.P)^\Imc  & \coloneqq \big\{ d\in\Delta^\Imc \mid \text{if}\; c_i\in p_i^\Imc(d) \;\text{for}\; i = 1,\dotsc,k, \;\text{then}\; (c_1,\dots,c_k)\in P^D\big\}.
\end{align*}
An \emph{\ALCD ontology} $\Omc=\Amc\cup\Tmc$ consists of an \emph{ABox}~\Amc and a \emph{TBox}~\Tmc over \ALCD concepts. 

Note that we explicitly consider \ALCD and not \ALCFD, i.e.\ we do not allow functional roles, in contrast to other definitions from the literature~\cite{BaHa91,LuMi07}. In the literature, feature paths are either restricted to contain only functional roles, or to have a length of at most 2, (compare with~\cite{DBLP:phd/dnb/Lutz02,LuMi07}). Consequently, we restrict ourselves to feature paths of length $\le 2$, that is, assume that they are of the form $f$ or $rf$.
We are not aware of any work that considers feature paths over non-functional roles of length longer than 2, and leave the investigation of this case for future work.
Without loss of generality, we can assume that universal CD-restrictions are not used in concepts, because we can express $\forall p_1,\dots,p_k.P$ as $\lnot\exists p_1,\dots,p_k.P_1\sqcap\dots\sqcap\lnot\exists p_1,\dots,p_k.P_m$, where $P_1,\dots,P_m$ are all $k$-ary predicates of~$\cDom$ except for~$P$ (the union of these predicates is equivalent to the complement of~$P$ due to \JEPD, and there are only finitely many of them since the signature is assumed to be finite).
We can additionally assume that concepts do not contain value restrictions $\forall r.C$ or disjunctions $C\sqcup D$ since they can be expressed using only negation, conjunction and existential restriction.

\section{Consistency in \texorpdfstring{\ALCD}{ALC(D)}}
\label{sec:satisfiability}

Let now $\cDom$ be a fixed \expadm concrete domain, $\Omc=\Amc\cup\Tmc$ be an \ALCD ontology, and \Mmc be the set of all subconcepts appearing in~\Omc and their negations.
For the type elimination algorithm, we start by defining the central notion of \emph{types}, which is standard.
\begin{definition}
  \label{dfn:type}
  A set $t \subseteq \Mmc$ is a \emph{type} w.r.t.~\Omc if it satisfies the following properties:
  \begin{itemize}
    \item if $C\sqsubseteq D\in\Tmc$ and $C \in t$, then $D \in t$;
    \item if $\top\in\Mmc$, then $\top\in t$;
    \item if $\neg D \in \Mmc$, then $D \in t$ iff $\neg D \notin t$;
    \item if $D \sqcap  D' \in \Mmc$, then $D \sqcap D' \in t$ iff $D \in t$ and $D' \in t$.
  \end{itemize}
  Given a model \Imc of \Omc and an individual $d \in \Delta^{\Imc}$, the \emph{type of $d$} w.r.t.~\Omc is the set
  \begin{equation*}
    t_{\Imc}(d) := \big\{ C \in \Mmc \mid d \in C^{\Imc} \big\}.
  \end{equation*}
\end{definition}
Clearly, $t_{\Imc}(d)$ satisfies the four conditions required to be a type w.r.t.~\Omc.
We use this connection between individuals and types to define \emph{augmented types} that represent the relationship between an individual, its role successors, and the CD-restrictions that ought to be satisfied.
Hereafter, let \nex be the number of existential restrictions $\exists r.C$ in~\Mmc, and \ncd the number of CD-restrictions $\exists p_1,\dotsc,p_k. P$ in \Mmc.
The maximal arity of predicates~$P$ occurring in~\Mmc is denoted by~\nar, and we define $\nt := \nex + \ncd \cdot \nar$.
Intuitively, each non-negated existential restriction in a type needs a successor (and associated type) to be realized, while CD-restrictions may require \nar role successors to fulfill a certain constraint.
Therefore, $\nt$ is an upper bound on the number of successors needed to satisfy all the non-negated restrictions occurring in a type~$t$ w.r.t.~\Omc.

Given a type $t_0$, we define a constraint system associated with a sequence of types $t_1,\dotsc,t_{\nt}$ representing the role successors of a domain element with type $t_0$.
This system contains a variable~$f^i$ for each feature~$f$ of an individual with type~$t_i$ in order to express the relevant CD-restrictions.
Concrete features that are not represented in this system can remain undefined since their values are irrelevant for satisfying the CD-restrictions.
\begin{definition}
  \label{dfn:local-system}
  A \emph{local system} for a type $t_0$ w.r.t.\ a sequence of types $t_1,\dotsc,t_{\nt}$
  is a complete constraint system $\cs$ for which there exists a \emph{successor function} $\agsucc\colon \NR(\Omc)\to\mathcal{P}(\{1,\dots,\nt\})$, such that, for all $\exists p_1,\dots,p_k.P\in\Mmc$, the following condition holds:
  \begin{center}
    $\exists p_1,\dotsc,p_k. P \in t_0$ iff there is $P(v_1,\dotsc,v_k) \in \cs$ for some variables $v_1,\dotsc,v_k$ such that
    \begin{equation*}
      v_i =
      \begin{cases}
        f^0 & \text{if $p_i=f$, or}                     \\
        f^j & \text{if $p_i=rf$ and $j\in \agsucc(r)$.}
      \end{cases}
    \end{equation*}
  \end{center}
\end{definition}
We use a \emph{sequence} instead of a \emph{set} of types for the role successors, since there can be TBoxes that require the existence of two successors with the same type that only differ in their feature values.
For example, for the consistent ontology $\Omc:=\{\top\sqsubseteq\exists rf,rf.{<}\}$ over $\stru{Q}=(\mathbb{Q},{<},{=},{>})$, we have $\Mmc=\{\top,\ \neg\top,\ \exists rf,rf.{<},\ \neg\exists rf,rf.{<}\}$, and the only type is $t=\{\top,\ \exists rf,rf.{<}\}$.
Any $r$-successors witnessing $\exists rf,rf.{<}$ for an element in a model of~\Omc have the same type~$t$.
However, we cannot express the restriction on their $f$-values by the (unsatisfiable) constraint ${<}(f^t,f^t)$, but need to consider two copies $t_1,t_2$ of $t$ to get the (satisfiable) constraint ${<}(f^1,f^2)$.

To merge the local systems associated to types of adjacent elements in a model, we introduce the following operation.
For two local systems~$\cs,\cs'$, the \emph{merged system} $\cs \merge{i} \cs'$ is obtained as the union of~$\cs$ and~$\cs'$ where we identify all features with index~$i$ in~$\cs$ with those of index~$0$ in~$\cs'$, while keeping the remaining variables separate.
Formally, we first replace all variables~$f^j$ in~$\cs'$ by fresh variables~$f^{j'}$ and subsequently replace the variables~$f^{0'}$ in~$\cs'$ by~$f^i$.

\begin{definition}
  \label{dfn:augmented-type}
  An \emph{augmented type} for~\Omc is a tuple $\agt := (t_0,\dotsc,t_{\nt},\cs_{\agt})$ where $t_0,\dotsc,t_{\nt}$ are types for~\Omc and $\cs_{\agt}$ is a local system for~$t_0$ w.r.t.\ $t_1,\dotsc,t_{\nt}$. 
  The \emph{root} of~$\agt$ is $\agroot(\agt) := t_0$.
  The augmented type $\agt$ is \emph{locally realizable} if $\cs_{\agt}$ has a solution and if there exists a successor function $\agsucc_{\agt}$
  for $\cs_{\agt}$
  s.t. for all concepts $\exists r. C \in \Mmc$, it holds that
  \begin{center}
    $\exists r. C \in \agroot(\agt)$ iff there is $i \in \agsucc_{\agt}(r)$ such that $C \in t_i$.
  \end{center}
  An augmented type $\agt'$ then \emph{patches} $\agt$ at $i \in \agsucc_{\agt}(r)$ if $\agroot(\agt') = t_i$
  and the system $\cs_{\agt} \merge{i} \cs_{\agt'}$ has a solution.
  A set of augmented types $\mathbb{T}$ \emph{patches} the locally realizable $\agt$ if, for every role name $r$ and every $i \in \agsucc_{\agt}(r)$, there is a $\agt' \in\mathbb{T}$ that patches $\agt$ at $i$.
\end{definition}
For the ontology \Omc introduced above, we have $\nt = 2$ since \Mmc only contains one CD-restriction over a binary predicate.
Using infix notation, all augmented types $\agt = (t,t,t,\cs_\agt)$ for~\Omc are such that $\cs_\agt$ contains the constraints $f^i = f^i$ for $i = 0,1,2$, and either $f^1 < f^2$ or $f^2 < f^1$.
There are augmented types~$\agt$ that are not locally realizable, for instance if $\cs_\agt$ contains $f^0 = f^1$, $f^0 = f^2$, and $f^1 < f^2$.
On the other hand, there is a locally realizable augmented type using the constraints $f^0 < f^1$, $f^1 < f^2$, and $f^0 < f^2$, which can patch itself both at $i \in \{1,2\}$.

To additionally handle named individuals and concept and role assertions, we introduce a structure~$\agt_\Amc$ that describes all ABox individuals and their connections simultaneously, similar to the common notion of \emph{precompletion}.
The associated constraint system~$\cs_\Amc$ now uses variables~$f^{a,i}$ indexed with individual names~$a$ in addition to numbers~$i$.

\begin{definition}\label{dfn:abox-type}
  An \emph{ABox type} for~\Omc is a tuple $\agt_\Amc:=\big((\agt_a)_{a\in \NI(\Amc)},\Amc_R,\cs_\Amc\big)$, where $\agt_a$ are augmented types, $\Amc_R$ is a set of role assertions over $\NI(\Amc)$ and $\NR(\Omc)$, and $\cs_\Amc$ is a complete and satisfiable constraint system such that, for every $a\in \NI(\Amc)$,
  \begin{itemize}
    \item for every concept assertion $C(a)\in\Amc$, we have $C\in\agroot(\agt_a)$;
    \item for every role assertion $r(a,b)\in\Amc$, we have $r(a,b)\in\Amc_R$;
    \item for every $\lnot\exists r.C \in \agroot(\agt_a)$ and $r(a,b)\in\Amc_R$, we have $C\notin\agroot(\agt_b)$;
    \item for every $P(f_1^{j_1},\dots,f_k^{j_k})\in \cs_{\agt_a}$, we have $P(f_1^{a,j_1},\dots,f_k^{a,j_k})\in \cs_\Amc$;
    \item for every $\lnot\exists p_1,\dots,p_k.P\in\agroot(\agt_a)$, there can be no $P(v_1,\dots,v_k)\in \cs_\Amc$ with
          \[ v_i =
            \begin{cases}
              f^{a,0} & \text{if $p_i=f$,}                                \\
              f^{b,0} & \text{if $p_i=rf$ and $r(a,b)\in\Amc_R$, or}      \\
              f^{a,j} & \text{if $p_i=rf$ and $j\in\agsucc_{\agt_a}(r)$;}
            \end{cases}
          \]
  \end{itemize}
\end{definition}
Positive occurrences of existential role or CD-restrictions in the ABox type do not need to be handled, as these are satisfied by anonymous successors described in the augmented types~$\agt_a$.

\subsection{The Type Elimination Algorithm}

\Cref{alg:elimination} uses the introduced notions to check consistency of~\Omc.
\begin{algorithm}[tb]
  \caption{Elimination algorithm for consistency of \ALCD ontologies}
  \label{alg:elimination}
  \begin{algorithmic}[1]
    \Require An \ALCD ontology $\Omc=\Amc\cup\Tmc$
    \Ensure \textsc{consistent} if \Omc is consistent, and \textsc{inconsistent} otherwise
    \State $\Mmc \gets \text{all subconcepts occurring in \Omc and their negations}$
    \State $\mathbb{T} \gets \text{all augmented types for~\Omc}$
    \While{there is $\agt \in \mathbb{T}$ that is not locally realizable or not patched by $\mathbb{T}$}\label{a:loop}
    \State $\mathbb{T} \gets \mathbb{T} \setminus \{\agt\}$
    \EndWhile
    \If{there is an ABox type~$\agt_\Amc$ for~\Omc with $\agt_a \in \mathbb{T}$ for all $a\in \NI(\Amc)$} \label{a:abox}
    \State \textbf{return} \textsc{consistent}
    \Else
    \State \textbf{return} \textsc{inconsistent}
    \EndIf
  \end{algorithmic}
\end{algorithm}

\begin{lemma}[Soundness]
  \label{lem:soundness}
  If \Cref{alg:elimination} returns \textsc{consistent}, then \Omc is consistent.
\end{lemma}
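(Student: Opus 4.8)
The plan is to assume that \Cref{alg:elimination} returns \textsc{consistent} and to build, from the final set of augmented types~$\mathbb{T}$ and the ABox type~$\agt_\Amc$ witnessing success of the test on line~\ref{a:abox}, a model~\Imc of~\Omc. First I would construct the abstract skeleton of the model as a forest: the roots are the ABox individuals $a\in\NI(\Amc)$, each assigned the augmented type~$\agt_a$; then, repeatedly, whenever a domain element~$x$ has been introduced together with an augmented type $\agt=(t_0,\dots,t_{\nt},\cs_{\agt})$ and a successor function $\agsucc_{\agt}$, I add, for each role~$r$ and each $i\in\agsucc_{\agt}(r)$, a fresh successor $x_i$ reachable via~$r$, and assign to~$x_i$ an augmented type $\agt'\in\mathbb{T}$ that patches~$\agt$ at~$i$ (which exists because~$\agt$ is patched by~$\mathbb{T}$, the loop on line~\ref{a:loop} having terminated). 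The role assertions in $\Amc_R$ are added between the corresponding roots. I would define $t_\Imc(x):=\agroot(\agt)=t_0$ for the augmented type~$\agt$ attached to~$x$, and set $d\in C^\Imc$ iff $C\in t_\Imc(x)$ for $C\in\Mmc$.

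The crux is the interpretation of concrete features and the verification that CD-restrictions are satisfied. For this I would build one global constraint system~$\cs^\ast$ by taking $\cs_\Amc$ for the ABox part and, for every abstract element~$x$ introduced during the unravelling with augmented type~$\agt$ and chosen patching successors $\agt'$ at positions~$i$, adding a renamed copy of $\cs_{\agt}\merge{i}\cs_{\agt'}$ with variables $f^j$ relocated to the globally fresh names for~$x$ and~$x_i$. The merge operation is designed precisely so that the variable $f^i$ of~$x$'s system is identified with the variable $f^0$ of~$x_i$'s system, which is what makes feature values propagate consistently along roles, so each pairwise merge used in the construction has a solution by the definition of "patches". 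Since~$\cDom$ is a patchwork, and in particular satisfies \AP, and since every finite subsystem of~$\cs^\ast$ is contained in a finite union of such pairwise-compatible complete systems (agreeing on shared variables by construction), every finite subsystem of~$\cs^\ast$ is satisfiable; then homomorphism $\omega$-compactness of~$\cDom$ gives a solution $h$ of the whole countable system~$\cs^\ast$. I would then define $f^\Imc(x):=h(f^{x,0})$ whenever $f^{x,0}$ occurs in~$\cs^\ast$, and leave $f^\Imc(x)$ undefined otherwise.

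With \Imc fully specified, I would check the three kinds of conditions. The TBox and the Boolean structure of concepts follow from~$t_\Imc(x)$ being a type w.r.t.~\Omc (each~$t_i$ is a type by \Cref{dfn:augmented-type}) together with a routine induction on concept structure; existential restrictions $\exists r.C$ are handled by the local realizability clause (which gives a successor of the right type exactly when $\exists r.C$ is in the root) and negated ones $\lnot\exists r.C$ by the patching clause plus the third bullet of \Cref{dfn:abox-type} on the ABox side; CD-restrictions $\exists p_1,\dots,p_k.P$ are handled by the defining "iff" of a local system (\Cref{dfn:local-system}) applied to $\cs_{\agt}$ for~$x$, noting that $h$ restricted to the variables of that subsystem is a solution that realizes exactly the CD-restrictions in~$t_0$, and that feature paths $rf$ read off the merged-in successor's variable $f^0 = h(f^{x,i})$; negated CD-restrictions use the last bullet of \Cref{dfn:abox-type} and the corresponding "only if" direction of local realizability. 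Finally, the ABox conditions are immediate from \Cref{dfn:abox-type}: concept assertions sit in the roots, role assertions are in~$\Amc_R$ hence in~$r^\Imc$, and feature values at named individuals are constrained by $\cs_\Amc\subseteq\cs^\ast$.

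I expect the main obstacle to be the careful bookkeeping in the global system~$\cs^\ast$: one must check that after the variable renamings, distinct pairwise merges genuinely agree on their common variables (they can only share variables of the form $f^{x,i}$ coming from a single element~$x$ and its relation to~$x$'s parent or children, and the merge was defined to make exactly these identifications), so that the finite-subsystem argument via \AP goes through. A secondary subtlety is the "only if" direction for CD-restrictions at the ABox roots: there one must ensure that no CD-restriction is accidentally satisfied in~\Imc that is not present in~$\agroot(\agt_a)$, which is exactly what the fourth and fifth bullets of \Cref{dfn:abox-type} (together with the local-system condition inside each~$\agt_a$, pushed into~$\cs_\Amc$) are engineered to prevent. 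Both points are conceptually clear from the definitions but require writing the renaming maps explicitly.
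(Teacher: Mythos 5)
Your plan matches the paper's proof essentially step for step: unravel a forest-shaped model from the surviving augmented types and the ABox type, assemble a global constraint system from renamed copies of the local systems plus $\cs_\Amc$, establish satisfiability of its finite parts via completeness, \JEPD{} and \AP{} (the paper's Claim~\ref{claim:patchwork}), invoke homomorphism $\omega$-compactness for the full system to define the features, and then verify membership of concepts by structural induction (the paper's Claim~\ref{claim:soundness}), using the ABox-type conditions for the named part. The bookkeeping issues you flag (agreement of the merged systems on shared variables, and the negated CD-restrictions at ABox roots) are exactly the points the paper's appendix proofs spell out, so your proposal is correct and takes the same route.
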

\begin{proof}
  Assume that $\mathbb{T}$ and $\agt_A=((\agt_a)_{a\in \NI(\Amc)},\Amc_R,\cs_\Amc)$ are obtained after a successful run of the elimination algorithm.
  We use them to define a forest-shaped interpretation \Imc that is a model of~\Omc.
  The domain of this model consists of pairs $(a,w)$, where $a\in \NI$ designates a tree-shaped part of~\Imc whose structure is given by the words~$w$ over the alphabet $\Sigma := \mathbb{T} \times \{0,\dotsc,\nt\}$. A pair $(\agt,i)\in\Sigma$ describes an augmented type and the position relative to the restriction that this augmented type fulfills w.r.t.\ its parent in the tree.
  For a word $w \in \Sigma^{+}$, we define $\wend(w) := \agt$ if $(\agt,j)$ occurs at the last position of $w$ for some $j \in \{0,\dotsc,\nt\}$.

  We start defining the domain of~\Imc by $\Delta^0 := \{ (a, w_a) \mid a\in \NI(\Amc),\ w_a:=(\agt_a, 0) \}$. Observe that $w_a\in\Sigma$, since $\agt_a \in \mathbb{T}$.
  Assuming that $\Delta^{m}$ is defined, we define $\Delta^{m+1}$ based on $\Delta^m$, and subsequently construct the domain of \Imc as the union of all sets $\Delta^{m}$.
  Given $(a,w) \in \Delta^m$ with $\wend(w) = \agt$,
  we observe that $\agt$ must have a successor function $\agsucc_{\agt}$ s.t. for every $i \in \agsucc_{\agt}(r)$, there is an augmented type $\mathfrak{u}^i \in \mathbb{T}$ patching $\agt$ at~$i$, as otherwise $\agt$ would have been eliminated from $\mathbb{T}$.
  We use these augmented types to define $\Delta_r^{m+1}[a,w] := \{ (a,w \cdot (\mathfrak{u}^i,i)) \mid i \in \agsucc_{\agt}(r) \}$ to then obtain
  \begin{equation*}
    \Delta^{m+1} := \Delta^m \cup \bigcup \big\{ \Delta_r^{m+1}[a,w] \mid (a,w) \in \Delta^m \;\text{and}\; r \in \NR \big\}
  \end{equation*}
  and set $\Delta^{\Imc} := \bigcup_{m \in \mathbb{N}} \Delta^m$.
  The interpretation of individual, concept, and role names over \Imc is given by
  \begin{align*}
    a^\Imc   & := (a,w_a),                                                                                                                                               \\
    A^{\Imc} & := \big\{ (a,w) \in \Delta^{\Imc} \mid \wend(w) = \agt \;\text{and}\; A \in \agroot(\agt) \big\},                                                         \\
    r^{\Imc} & := \big\{ ((a,w_a),(b,w_b)) \mid r(a,b)\in\Amc_R \big\} \cup{}                                                                                            \\
             & \phantom{{}:={}} \big\{ ((a,w),(a,w')) \mid (a,w) \in \Delta^{m} \;\text{and}\; (a,w') \in \Delta^{m+1}_r[a,w] \; \text{with} \; m \in \mathbb{N} \big\}.
  \end{align*}

  Defining the interpretation of feature names in \Imc requires more work.
  Given $(a,w) \in \Delta^{\Imc}$ with $\wend(w) = \agt$, let $\cs_{a,w}$ be the constraint system obtained by replacing every variable $f^0$ in $\cs_{\agt}$ with $f^{a,w}$ and every other variable $f^i$ in $\cs_{\agt}$ with $f^{a,u}$, where $u \in \Sigma^{+}$ is the unique word of the form $w \cdot (\agt', i)$ for which $(a,u)\in\Delta^{\Imc}$.
  Correspondingly, let $\cs_\Amc^0$ be the result of replacing all variables~$f^{a,0}$ in~$\cs_\Amc$ by $f^{a,w_a}$ and $f^{a,i}$ by $f^{a,u}$, where $u$ is the unique word of the form $w_a\cdot(\agt',i)$ for which $(a,u)\in\Delta^\Imc$.
  For $m \in \mathbb{N}$, let $\cs^m$ be the union of $\cs_\Amc^0$ and all constraint systems $\cs_{a,w}$ for which $(a,w) \in \Delta^m$.
  The proofs of the following claims can be found \inAppendix.
  \begin{restatable}{claim}{FirstClaim}
    \label{claim:patchwork}
    For every $m \in \mathbb{N}$, the constraint system $\cs^m$ has a solution.
  \end{restatable}

  Using this claim, we show how to define an interpretation of feature names for \Imc.
  Let $\cs^{\Imc}$ be the union of all systems $\cs^m$ for $m \in \mathbb{N}$.
  Every finite system $\ocs \subseteq \cs^{\Imc}$ is also a subsystem of $\cs^m$ for some $m \in \mathbb{N}$.
  Since $\cs^m$ has a solution, it follows that $\ocs$ has a solution.
  Every finite subsystem of $\cs^{\Imc}$ has a solution; since $\cDom$ has the homomorphism $\omega$-compactness property, we infer that $\cs^{\Imc}$ has a solution $h^{\Imc}$.
  Using this solution, we define for every feature name $f$ the interpretation $f^{\Imc}(a,w) := h^{\Imc}(f^{a,w})$ if $f^{a,w}$ occurs in $\cs^{\Imc}$, and leave it undefined otherwise.

  \begin{restatable}{claim}{SecondClaim}
    \label{claim:soundness}
    If $C \in \Mmc$ and $(a,w) \in \Delta^{\Imc}$ with $\wend(w) = \agt$, then
    $C \in \agroot(\agt)$ iff $(a,w) \in C^{\Imc}$.
  \end{restatable}
  By this claim and Definition~\ref{dfn:type}, we know that \Imc is a model of \Tmc.
  By~\Cref{dfn:abox-type} we obtain that for each $C(a)\in\Amc$, we have $C\in\agroot(\agt_a)$, and thus $a^\Imc=(a,w_a)=(a,(\agt_a,0))\in C^\Imc$.
  Similarly, whenever $r(a,b)\in\Amc$, then $r(a,b)\in\Amc_R$, and thus $(a^\Imc,b^\Imc)=((a,w_a),(b,w_b))\in r^\Imc$ by the construction of~\Imc.
  Therefore, we conclude that \Imc is also a model of \Amc, and thus of \Omc.
\end{proof}
\begin{lemma}[Completeness]
  \label{lem:completeness}
  If \Omc is consistent, then~\Cref{alg:elimination} returns \textsc{consistent}.
\end{lemma}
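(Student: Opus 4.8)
The proof of completeness is the converse direction: starting from a model $\Imc$ of $\Omc$, I would extract a set $\mathbb{T}$ of augmented types that survives the elimination loop, together with an ABox type realized by the named individuals, so that \Cref{alg:elimination} necessarily returns \textsc{consistent}. The standard template is: (i) read off from $\Imc$ a canonical augmented type for every domain element; (ii) show every such augmented type is locally realizable; (iii) show the collection of all such augmented types patches each of its members; (iv) conclude the elimination loop removes none of them, and finally (v) build the required ABox type from the named individuals.

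\begin{proof}[Proof plan]
Let $\Imc$ be a model of $\Omc$. For each $d\in\Delta^\Imc$ I would first choose, for every role name $r$, a set of witnesses: for each non-negated $\exists r.C\in t_\Imc(d)$ pick an $r$-successor realizing $C$, and for each non-negated CD-restriction $\exists p_1,\dots,p_k.P\in t_\Imc(d)$ with some $p_i=rf$ pick an $r$-successor whose $f$-value participates in a witnessing tuple in $P^D$. Since there are at most $\nex$ existential restrictions and at most $\ncd$ CD-restrictions, each demanding at most $\nar$ successors, these witnesses can be arranged into a sequence $e_1,\dots,e_{\nt}$ (padding with arbitrary successors or repetitions if fewer are needed), with an accompanying $\agsucc(r)$ recording which indices are $r$-successors. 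Setting $t_i:=t_\Imc(e_i)$ and taking $\cs_{\agt(d)}$ to be the complete constraint system on the variables $f^i$ obtained by transcribing, for every tuple of features of $d$ and the $e_i$, the unique predicate relating their $\Imc$-values (this uses \JEPD to get exactly one predicate, and \JD/completeness for the equality constraints), yields an augmented type $\agt(d):=(t_\Imc(d),t_1,\dots,t_{\nt},\cs_{\agt(d)})$. Let $\mathbb{T}_\Imc:=\{\agt(d)\mid d\in\Delta^\Imc\}$.

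Next I would verify the two invariants. \emph{Local realizability:} $\cs_{\agt(d)}$ has the solution $f^i\mapsto f^\Imc(e_i)$ (and $f^0\mapsto f^\Imc(d)$) by construction, and the successor function $\agsucc$ witnesses the $\exists r.C$ equivalence precisely because of how the witnesses were chosen — the forward direction is by the witness choice, the backward direction because if $C\in t_i$ then $e_i\in C^\Imc$ and $(d,e_i)\in r^\Imc$, so $d\in(\exists r.C)^\Imc$. The CD-restriction clause of \Cref{dfn:local-system} is checked the same way, using that in a model every CD-restriction in the type is witnessed by actual concrete values, and conversely a constraint present in $\cs_{\agt(d)}$ reflects genuine $\Imc$-values, so the semantics of $\exists p_1,\dots,p_k.P$ matches membership in the type. \emph{Patching:} given $\agt(d)$, a role $r$, and $i\in\agsucc_{\agt(d)}(r)$, the element $e_i$ is an actual $r$-successor with $\agt(e_i)\in\mathbb{T}_\Imc$ and $\agroot(\agt(e_i))=t_\Imc(e_i)=t_i$; the merged system $\cs_{\agt(d)}\merge{i}\cs_{\agt(e_i)}$ has a solution because both local systems are realized in $\Imc$ and, after the variable renaming in the definition of $\merge{i}$, the identified variables $f^i$ (index $i$ in $\cs_{\agt(d)}$, index $0$ in $\cs_{\agt(e_i)}$) both denote $f^\Imc(e_i)$, so the two solutions glue into one. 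Hence no $\agt\in\mathbb{T}_\Imc$ is ever removed by the loop, and the loop terminates with some $\mathbb{T}\supseteq\mathbb{T}_\Imc$.

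Finally, for the ABox check I would set $\agt_a:=\agt(a^\Imc)$ for each $a\in\NI(\Amc)$, let $\Amc_R$ be the set of role assertions $r(a,b)$ with $(a^\Imc,b^\Imc)\in r^\Imc$ (which includes all role assertions of $\Amc$), and let $\cs_\Amc$ transcribe, over the variables $f^{a,i}$, the predicates relating the $\Imc$-values of the named individuals' features and their chosen witnesses. All clauses of \Cref{dfn:abox-type} then hold in $\Imc$: concept assertions $C(a)\in\Amc$ give $a^\Imc\in C^\Imc$ hence $C\in\agroot(\agt_a)$; the $\lnot\exists r.C$ clause holds because if $r(a,b)\in\Amc_R$ then $(a^\Imc,b^\Imc)\in r^\Imc$, so $C\in\agroot(\agt_b)$ would force $a^\Imc\in(\exists r.C)^\Imc$; the constraint-system clauses and the $\lnot\exists p_1,\dots,p_k.P$ clause follow from the fact that $\cs_\Amc$ records the true predicates among genuine concrete values, so any forbidden tuple would contradict $\lnot\exists p_1,\dots,p_k.P$ holding at $a^\Imc$. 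Since $\agt_a\in\mathbb{T}_\Imc\subseteq\mathbb{T}$ for all $a$, line~\ref{a:abox} of the algorithm succeeds and it returns \textsc{consistent}.

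\textbf{Main obstacle.} I expect the delicate point to be the bookkeeping in step (i)–(ii): arranging the chosen witnesses into a single fixed-length sequence $t_1,\dots,t_{\nt}$ with one consistent $\agsucc$ that simultaneously serves both the $\exists r.C$ clause of local realizability and the CD-restriction clause of \Cref{dfn:local-system}, especially when the same anonymous successor must witness several restrictions or when, as in the $\exists rf,rf.{<}$ example from the text, two copies of the same type are genuinely needed. One must be careful that the indices used for the feature variables in $\cs_{\agt(d)}$ and those recorded in $\agsucc(r)$ refer to the same sequence positions, and that padding positions do not spuriously create or destroy required witnesses. The concrete-domain reasoning itself is comparatively routine, since every constraint we write down is literally satisfied by the $f^\Imc$-values in the given model, so satisfiability of all the local and merged systems is immediate; the care is entirely combinatorial.
\end{proof}
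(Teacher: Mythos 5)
Your proposal is correct and follows essentially the same route as the paper: extract an augmented type $\agt_\Imc(d)$ for every domain element by choosing witness successors for the (CD-)restrictions in its type and transcribing the actual feature values into a complete constraint system, verify local realizability and patching using the model's own values as solutions (so merged systems glue along the shared variables), conclude $\mathbb{T}_\Imc$ survives the elimination loop, and assemble the ABox type from $\agt_{a^\Imc}$, the realized role assertions, and the transcribed constraints over $f^{a,i}$. The bookkeeping concern you flag is handled in the paper exactly as you suggest, by fixing offsets $\nex + (i-1)\cdot\nar + j$ for the CD-witness positions and padding unused positions with types of arbitrary domain elements.
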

\begin{proof}
  Let \Imc be a model of \Omc and define the set $T_{\Imc}$ of all types that are realized in \Imc, that is,
  \begin{equation*}
    T_{\Imc} := \big\{ t_{\Imc}(d) \mid d \in \Delta^{\Imc} \big\}.
  \end{equation*}
  Given that \Imc is a model of \Omc, every element of $T_{\Imc}$ is a type according to~\Cref{dfn:type}.
  Using the elements of $T_{\Imc}$, we construct a set $\mathbb{T}_{\Imc}:=\{\agt_{\Imc}(d)\mid d\in\Delta^{\Imc}\}$ of augmented types.
  For any domain element $d\in \Delta^{\Imc}$, we build the augmented type $\agt_{\Imc}(d)=(t_0,\dotsc,t_{\nt},\cs_d)$ as follows.

  First, we set $t_0 := t_{\Imc}(d) \in T_{\Imc}$.
  Assuming that $\exists r_i. C_i \in \Mmc$ for $i = 1,\dotsc,\nex$, we select types $t_1,\dotsc,t_{\nex}$ to add to $\agt$ that realize the (possibly negated) existential role restrictions occurring in $t_0$.
  If $\exists r_i. C_i \in t_0$, then we can select $t_i$ as the type of an $r$-successor $d'$ of $d$ such that $d' \in C_i^{\Imc}$; otherwise, we pick $t_i$ as the type of an arbitrary element in $\Delta^{\Imc}$.

  Next, we assume that $\exists p_1^i,\dotsc,p_k^i. P_i \in \Mmc$ for $i = 1,\dotsc,\ncd$ and define the function $\mathsf{off}(i,j) := \nex + (i - 1) \cdot \nar + j$ to be able to refer to the $j$-th path in the $i$-th CD-restriction above.
  We select types $t_{\mathsf{off}(i,1)},\dotsc,t_{\mathsf{off}(i,\nar)}$ that realize the (possibly negated) existential CD-restrictions occurring in $t_0$ for $i = 1,\dotsc,\ncd$.
  If $\exists p^i_1,\dotsc,p^i_k. P_i \in t_0$, then there exist values $v^i_j \in {(p^i_j)}^{\Imc}(d)$ for $j = 1,\dotsc,k$ such that $(v^i_1,\dotsc,v^i_k) \in P^D$.
  If $p_j = r f$ holds for some feature name $f$ and some role name $r$, let $t_{\mathsf{off}(i,j)}$ be the type of an $r$-successor $d'$ of $d$ such that $v_j^i = f^{\Imc}(d)$.
  For every $j = 1,\dotsc,\nar$ for which $t_{\mathsf{off}(i,j)}$ has not been selected this way, let $t_{\mathsf{off}(i,j)}$ be the type of an arbitrary individual in $\Delta^\Imc$.
  Similarly, if $\exists p^i_1,\dotsc,p^i_k. P_i \notin t_0$ then we set $t_{\mathsf{off}(i,j)}$ be the type of an arbitrary individual in $\Delta^\Imc$ for $j = 1,\dotsc,\nar$.

  The two processes described in the two previous paragraphs yield a sequence of types $t_1,\dotsc,t_{\nt}$ that occur in $T_{\Imc}$ and can thus be associated to individuals $d_1,\dotsc,d_{\nt} \in \Delta^{\Imc}$.
  Using these individuals, we define the local system associated to our augmented type~$\agt_{\Imc}(d)$.
  First, we define the constraint system $\cs_d$ that contains the constraint $P\big(f_1^{i_1},\dotsc,f_k^{i_k}\big)$ iff $\big(f_1^{\Imc}(d_{i_1}),\dotsc, f_k^{\Imc}(d_{i_k})\big) \in P^D$ for all $i_1,\dotsc,i_k \in \{0,\dotsc,\nt\}$.
  We associate to this constraint system a successor function $\agsucc_d$ that assigns to $r \in \NR$ all $i \in\{ 1,\dotsc,\nt\}$ for which $d_i$ is an $r$-successor of~$d$.
  This concludes our definition of $\agt_{\Imc}(d)$ for $d \in \Delta^{\Imc}$, and thus of $\mathbb{T}_{\Imc}$.
  \begin{restatable}{claim}{ThirdClaim}
    Every augmented type in $\mathbb{T}_{\Imc}$ is locally realizable and patched in $\mathbb{T}_{\Imc}$.
  \end{restatable}
  Therefore, no augmented type $\agt\in\mathbb{T}_{\Imc}$ is eliminated during a run of \Cref{alg:elimination}, and so $\mathbb{T}_\Imc \subseteq \mathbb{T}$.
  We further deduce that $\mathbb{T}$ cannot become empty, since $T_{\Imc}$ is non-empty.
  Using $\mathbb{T}_{\Imc}$ together with our model \Imc of $\Omc$ we derive an ABox type $\agt_{\Amc}^{\Imc}=\big((\agt_a)_{a\in \NI(\Amc)},\Amc_R,\cs_\Amc\big)$ for $\Omc$.
  We define each $\agt_a$, $a \in \NI(\Amc)$, as $\agt_a := \agt_{\Imc}(a^{\Imc}) \in \mathbb{T}_{\Imc}$.
  Assuming that $d_{a,i}$ is the domain element used to establish the $i$-th type in $\agt_a$ for $i \in \{0,\dotsc,\nt\}$, we define the constraint system $\cs_\Amc$ s.t.
  \begin{equation*}
    P(f_1^{a_1,i_1},\dotsc,f_k^{a_k,i_k}) \in \cs_\Amc \;\text{iff}\; (f_1^{\Imc}(d_{a_1,i_1}),\dotsc,f_k^{\Imc}(d_{a_k,i_k})) \in P^D.
  \end{equation*}
  Finally, we define the set $\Amc_R$ to consist of all role assertions $r(a,b)$ for which $a, b \in \NI(\Amc)$, $r \in \NR(\Omc)$, and $(a^{\Imc},b^{\Imc}) \in r^{\Imc}$.
  \begin{restatable}{claim}{FourthClaim}
    The object $\agt_{\Amc}^{\Imc}$ is an ABox type for \Omc with $\agt_a \in \mathbb{T}_{\Imc}$ for all $a \in \NI(\Amc)$.
  \end{restatable}
  Thus, there is a suitable ABox type for \Omc, and \Cref{alg:elimination} returns \textsc{consistent}.
\end{proof}
We can show in a standard way that~\Cref{alg:elimination} runs in exponential time w.r.t. \Omc and obtain the following theorem (a detailed proof can be found \inAppendix).

\begin{restatable}{theorem}{ThmConsistencyExpTime}
  \label{thm:consistency-exptime}
  Let $\cDom$ be an \expadm concrete domain.
  Then, the consistency problem for \ALCD ontologies is \ExpTime-complete.
\end{restatable}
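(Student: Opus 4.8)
The plan is to establish the two bounds separately. \ExpTime-hardness is inherited directly: \ALCD contains \ALC as a fragment, and consistency of \ALC ontologies w.r.t.\ a general TBox is already \ExpTime-hard, so the same lower bound holds for \ALCD over any concrete domain~$\cDom$. For the upper bound I would show that \Cref{alg:elimination} decides consistency in exponential time. Its correctness is already settled by \Cref{lem:soundness} and \Cref{lem:completeness}, so what remains is a running-time analysis, which I expect to be routine once the relevant quantities have been bounded.

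First I would bound the search space. Since $|\Mmc| = O(|\Omc|)$, there are at most $2^{|\Mmc|}$ types w.r.t.~\Omc. As the signature of~$\cDom$ is finite, the maximal arity~$\nar$ is a constant, so $\nt = \nex + \ncd\cdot\nar$ is polynomial in~$|\Omc|$, and the feature variables $f^i$ that may occur in a local system (with $f$ a feature of~\Omc and $0 \le i \le \nt$) are polynomially many. A complete constraint system over polynomially many variables involves only polynomially many variable tuples of arity at most~$\nar$, each assigned one of the constantly many predicates of that arity; hence there are $2^{\mathrm{poly}(|\Omc|)}$ complete constraint systems, and therefore at most exponentially many augmented types, so the initial set~$\mathbb{T}$ is of exponential size and computable in exponential time. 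Each iteration of the elimination loop then costs exponential time: to decide whether $\agt$ should be removed we enumerate the at most exponentially many candidate successor functions $\agsucc_{\agt}\colon\NR(\Omc)\to\mathcal{P}(\{1,\dots,\nt\})$, and for each of them check the polynomially verifiable conditions of \Cref{dfn:local-system} and \Cref{dfn:augmented-type}, decide satisfiability of the polynomially sized system~$\cs_{\agt}$, and, for each of the polynomially many pairs $(r,i)$ with $i\in\agsucc_{\agt}(r)$, scan~$\mathbb{T}$ for an $\agt'$ with $\agroot(\agt') = t_i$ such that the polynomially sized merged system $\cs_{\agt}\merge{i}\cs_{\agt'}$ is satisfiable; each satisfiability test takes exponential time because $\cDom$ is \expadm. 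Since every iteration deletes an augmented type, the loop runs at most $|\mathbb{T}|$ times, so it finishes in exponential time overall. Finally, the existence of a suitable ABox type is tested by enumerating all candidates $\agt_\Amc = \big((\agt_a)_{a\in\NI(\Amc)},\Amc_R,\cs_\Amc\big)$ --- a choice of $\agt_a\in\mathbb{T}$ for each of the polynomially many individuals, a subset of the polynomially many role assertions over $\NI(\Amc)$ and $\NR(\Omc)$, and a complete constraint system over the polynomially many variables $f^{a,i}$ --- of which there are $2^{\mathrm{poly}(|\Omc|)}$, checking the conditions of \Cref{dfn:abox-type} for each in exponential time, the only non-trivial one being satisfiability of~$\cs_\Amc$.

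The one point I would be careful about is that every constraint system handed to the \CSP oracle --- namely $\cs_{\agt}$, the merged systems $\cs_{\agt}\merge{i}\cs_{\agt'}$, and $\cs_\Amc$ --- has size polynomial in~$|\Omc|$; this relies on $\nar$ being constant and $\nt$ being polynomial, and it is exactly what makes the exponential-time \CSP requirement of \Cref{dfn:omega-admissible} strong enough to keep every oracle call, and hence the whole algorithm, within \ExpTime. Together with the \ExpTime lower bound this gives the claimed \ExpTime-completeness.
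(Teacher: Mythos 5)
Your proposal is correct and follows essentially the same route as the paper: correctness is delegated to the soundness and completeness lemmas, and the theorem is obtained by bounding the number of augmented types, loop iterations, and candidate ABox types exponentially, while noting that every constraint system passed to the \CSP decision procedure has size polynomial in $|\Omc|$ (since $\nar$ is constant and $\nt$ polynomial), so each such call stays within exponential time. Your explicit enumeration of successor functions and the explicit mention of the \ExpTime{} lower bound inherited from \ALC{} are only slightly more detailed than the paper's presentation, not a different argument.
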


\section{Concrete Domain Assertions}
\label{sec:assertions}
Beside using concrete domain restrictions on the concept level, we may want to use feature and predicate assertions in the ABox to either assign a specific value of a feature to some individual or directly constraint the values of features of different individuals. Formally, a \emph{feature assertion} is of the form $f(a,c)$, where $f\in\NF$, $a\in\NI$, and $c\in D$, and it is satisfied by an interpretation \Imc if $f^\Imc(a^\Imc)=c$. \emph{Predicate assertions} are of the form $P\big(f_1(a_1),\dotsc,f_k(a_k)\big)$, where $P$ is a $k$-ary relation over $\cDom$ and $f_i \in \NF$, $a_i \in \NI$ for $i = 1,\dotsc,k$. An interpretation~\Imc satisfies such an assertion if $\big(f_1^{\Imc}(a_1^{\Imc}),\dotsc,f_k^{\Imc}(a_k^{\Imc})\big) \in P^D$.
In our setting, we can simulate predicate assertions using concept and role assertions, which leads to the following result.

\begin{theorem}\label{thm:cd-assertions}
  For any \expadm concrete domain~$\cDom$, ontology consistency in \ALCD with predicate assertions is \ExpTime-complete.
\end{theorem}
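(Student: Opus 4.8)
I would prove \Cref{thm:cd-assertions} by a polynomial-time reduction to the case \emph{without} predicate assertions, which is covered by \Cref{thm:consistency-exptime}; \ExpTime-hardness is inherited from plain \ALC ontology consistency w.r.t.\ a general TBox. So let $\Omc=\Amc\cup\Tmc$ be an \ALCD ontology in which $\Amc$ may additionally contain predicate assertions. The key observation is that a predicate assertion $\alpha = P\big(f_1(a_1),\dots,f_k(a_k)\big)$ constrains a $k$-tuple of feature values of named individuals (with $k$ bounded by the maximal arity in the finite signature of $\cDom$), and the only \ALCD concept that can inspect such a tuple is a CD-restriction $\exists p_1,\dots,p_k.P$ at one element from which all the relevant features are reachable along feature paths of length $\le 2$. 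Accordingly, for each predicate assertion $\alpha$ I introduce a fresh individual name $u_\alpha$ and fresh role names $r_1^\alpha,\dots,r_k^\alpha$, add the role assertions $r_i^\alpha(u_\alpha,a_i)$ and the concept assertion $\big(\exists r_1^\alpha f_1,\dots,r_k^\alpha f_k.P\big)(u_\alpha)$; deleting all predicate assertions then yields an ordinary \ALCD ontology $\Omc'$ of size linear in $|\Omc|$.

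One direction is straightforward: from a model $\Imc$ of $\Omc$ one obtains a model of $\Omc'$ by interpreting each $u_\alpha$ as an arbitrary element and letting $r_i^\alpha$ consist only of the edge $(u_\alpha^\Imc,a_i^\Imc)$, so that $a_i^\Imc$ witnesses the new CD-restriction: its feature $f_i$ is defined and the resulting tuple lies in $P^D$ because $\Imc$ satisfies $\alpha$. The main obstacle is the converse, and it is precisely the absence of functional roles in \ALCD: a model of $\Omc'$ might satisfy the CD-restriction at $u_\alpha$ through fresh $r_i^\alpha$-successors distinct from $a_i^\Imc$ that carry unrelated $f_i$-values, so that nothing is enforced about $f_i^\Imc(a_i^\Imc)$ itself (one can construct small examples, already over $\stru{Q}$, where $\Omc'$ is consistent while $\Omc$ is not). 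To repair this I additionally add, for every $i$, a concept assertion at $u_\alpha$ stating that all $r_i^\alpha$-successors of $u_\alpha$ whose $f_i$-value is defined agree on that value (a ``local functionality'' statement for the path $r_i^\alpha f_i$), together with a concept assertion at $a_i$ stating that $f_i$ is defined. Since $a_i^\Imc$ is an $r_i^\alpha$-successor of $u_\alpha^\Imc$ by the role assertion, these force any witness of the existential CD-restriction to share $a_i$'s $f_i$-value, so the inspected tuple is exactly $\big(f_1^\Imc(a_1^\Imc),\dots,f_k^\Imc(a_k^\Imc)\big)$ and $\alpha$ holds; conversely, the model of $\Omc'$ built from a model of $\Omc$ above still satisfies these extra assertions, because there the unique $r_i^\alpha$-successor is $a_i^\Imc$ and $f_i^\Imc(a_i^\Imc)$ is defined.

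It remains to realize ``$f_i$ is defined at $a_i$'' and the local functionality statement as \ALCD concepts over $\cDom$. Definedness is easy from \JEPD: ``$f_i$ is defined'' is equivalent to the disjunction of $\exists f_i,f_i.Q$ over all binary predicates $Q$ of $\cDom$ (or over $n$-ary predicates with $f_i$ repeated $n$ times, if $\cDom$ has no binary relation), since by \JEPD these predicates cover all pairs $(c,c)$. For local functionality one uses \JD: equality on $\cDom$ is defined by a fixed quantifier-free, equality-free first-order formula over the finite signature, so the statement ``$\forall r_i^\alpha f_i, r_i^\alpha f_i.{=}$'' — i.e.\ ``the path $r_i^\alpha f_i$ has at most one value'' — can be rewritten, using \JEPD to replace negated atoms by unions of predicates, into a Boolean combination of CD-restrictions of $\cDom$; this concept has constant size because the defining formula and the signature are fixed. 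Plugging these constant-size concepts into the construction keeps the reduction polynomial, $\Omc'$ is a plain \ALCD ontology, and \Cref{thm:consistency-exptime} yields \ExpTime membership; with \ExpTime-hardness inherited from \ALC this gives \ExpTime-completeness.
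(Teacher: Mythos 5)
Your reduction has the same overall shape as the paper's (a fresh auxiliary individual with fresh roles and role assertions pointing to the $a_i$, definedness of $f_i$ at $a_i$ expressed via \JEPD, and one CD-restriction at the auxiliary element that inspects the tuple; both directions of the correctness argument are the same), but you diverge at the crucial point. The paper simply places a \emph{universal} CD-restriction $(\forall r_{a_1}f_1,\dots,r_{a_k}f_k.P)(a^*)$ at the auxiliary individual: since $a_i^\Imc$ is an $r_{a_i}$-successor of $a^{*\Imc}$ and $f_i^\Imc(a_i^\Imc)$ is forced to be defined, every model of the translation constrains exactly the tuple $\big(f_1^\Imc(a_1^\Imc),\dots,f_k^\Imc(a_k^\Imc)\big)$, and in the model built from a model of $\Omc$ the asserted edges are the only $r_{a_i}$-edges, so the universal restriction is satisfied. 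Universal CD-restrictions are legitimate \ALCD syntax (the remark in the preliminaries that one may assume they do not occur is a normalization via \JEPD, not a prohibition), so the witness problem you correctly identify for the existential encoding can be sidestepped entirely, with no functionality gadget.

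The genuine gap is in that gadget. You claim that ``all $r_i^\alpha f_i$-values of $u_\alpha$ agree'' can be obtained by taking the \JD-formula defining equality, eliminating negated atoms via \JEPD, and reading the result as a Boolean combination of CD-restrictions. This rewriting does not go through as described: in a CD-restriction the values of the different path positions are chosen \emph{independently}, so an atom with a repeated variable, say $P(x,x,y)$, does not translate to $\forall p,p,p.P$ (the latter constrains \emph{all} triples over the value set, not only those with equal first two components), and the universal path quantification does not distribute over the disjunctions that the \JEPD-elimination of negation introduces (dually, existential restrictions do not distribute over conjunctions of atoms about the same pair of values). So the concept produced by your recipe is in general not equivalent to ``the path has at most one value.'' The target statement is in fact provable for every concrete domain with finite signature satisfying \JEPD and \JD: one can show that $p^\Imc(d)$ has two distinct elements iff, for some arity $k$, two $k$-tuples over $p^\Imc(d)$ lie in two \emph{different} $k$-ary predicates (otherwise every atom of the \JD-formula would take the same value on $(c_1,c_1)$ and $(c_1,c_2)$, contradicting that it defines equality), and the latter is expressible as a disjunction of concepts $(\exists p,\dots,p.P)\sqcap(\exists p,\dots,p.P')$ over pairs $P\neq P'$ of equal arity, whose negation is your functionality concept. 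But some such argument is missing from your proof as written. With that repair, or more simply with the paper's universal restriction, the rest of your argument (both directions, polynomial size, membership via \Cref{thm:consistency-exptime}, hardness inherited from \ALC) is fine.
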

\begin{proof}
  First, we show how to simulate CD-restrictions of the form $\exists f.\top_\cDom$ that describe all individuals~$d$ in \Imc for which $f^\Imc(d)$ is defined.
  Although $\top_\cDom$ may not be a predicate of $\cDom$, by \JEPD and the fact that the signature of~$\cDom$ is non-empty and finite, $\top_\cDom$ can be expressed as the disjunction of some $k$-ary predicates $P_1,\dots,P_m$.
  This implies that for every $d \in D$ there is exactly one $k$-ary predicate $P_i$ such that $(d,\dotsc,d) \in P_i^D$.
  Thus, we can write $\exists f.\top_\cDom$ equivalently as ${\exists f,\dotsc,f.P_1}\sqcup{\dotsb}\sqcup{\exists f,\dotsc,f.P_m}$, where each restriction $\exists f,\dots,f.P_i$ repeats $f$ for $k$ times.

  Let now $\Omc=\Amc\cup\Tmc$ be an \ALCD ontology with predicate assertions.
  We introduce a fresh individual name~$a^*$ and fresh role names~$r_a$ for all individual names $a\in\NI(\Omc)$.
  The ontology $\Omc'$ is then obtained from \Omc by adding the role assertions $r_a(a^*,a)$ for all $a\in\NI(\Omc)$, and replacing all predicate assertions $P\big(f_1(a_1),\dotsc,f_k(a_k)\big)$ in~\Amc by the concept assertions $(\exists f_1.\top_\cDom)(a_1)$, \dots, $(\exists f_k.\top_\cDom)(a_k)$, and $(\forall r_{a_1}f_1,\dots,r_{a_k}f_k.P)(a^*)$.
  Any model~$\Imc$ of~$\Omc'$ satisfies $(c_1,\dots,c_k)\in P^D$ for all possible values $c_i\in (r_{a_i}f_i)^\Imc$, and thus in particular for $c_i=f_i^\Imc(a_i^\Imc)$, which shows that \Imc is also a model of~\Omc.
  Conversely, from every model~\Imc of~\Omc we obtain a model of~$\Omc'$ by choosing an arbitrary element $d^*\in\Delta^\Imc$ for the interpretation of~$a^*$ and adding $(d^*,a^\Imc)$ to the interpretation of~$r_a$ for every individual name~$a$.
\end{proof}

Feature assertions can also be simulated under certain conditions.
For concrete domains that contain \emph{singleton} predicates $=_c$ with $(=_c)^D=\{c\} \subseteq D$, we can express any feature assertion $f(a,c)$ using the concept assertion $(\exists f.{=}_c)(a)$.
However, due to its finite signature, $\cDom$ can only contain finitely many such predicates, and hence the feature assertions are restricted by the chosen concrete domain.
Due to the \JD and \JEPD conditions, it turns out that adding feature assertions is actually \emph{equivalent} to adding singleton predicates in the following sense.
Here, an \emph{additional singleton predicate}~$=_c$ is one that is not part of~$\cDom$, but otherwise can be used in an ontology with the same semantics as defined above; the proof can be found \inAppendix.

\begin{restatable}{theorem}{ThmFeatureAssertions}\label{theo:feature-assertions}
  For an $\omega$-admissible concrete domain~$\cDom$, ontology consistency in \ALCD with additional singleton predicates can be polynomially reduced to ontology consistency in \ALCD with feature assertions.
\end{restatable}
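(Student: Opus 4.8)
The plan is to eliminate every additional singleton predicate $=_{c_j}$ by ``naming'' its constant $c_j$ through a feature assertion and then forcing a fresh feature to be globally equal to $c_j$. So let $\Omc=\Amc\cup\Tmc$ be an \ALCD ontology using additional singletons $=_{c_1},\dots,=_{c_n}$; after dealing separately with the trivial cases (when $|D|=1$ every such predicate is equivalent to $\top_\cDom$ or to $\top$, and the case $\Amc=\emptyset$ is immediate), we may assume $\Amc\neq\emptyset$ and the $c_j$ pairwise distinct. I would construct $\Omc'=\Amc'\cup\Tmc'$ over $\cDom$ \emph{with feature assertions} (and without additional singletons) as follows: for each $j$, pick a fresh individual $a_j$, a fresh feature $g_j$ and a fresh role $u_j$; add the feature assertion $g_j(a_j,c_j)$ and the role assertions $u_j(a_j,b)$ for all $b\in\NI(\Omc)\cup\{a_1,\dots,a_n\}$; add the TBox axiom $\top\sqsubseteq\exists g_j.\top_\cDom$ making $g_j$ total (expressible as in the proof of \Cref{thm:cd-assertions}); and, for every role name $s$ occurring in $\Omc'$, add an axiom $\top\sqsubseteq\mathsf{Prop}^{g_j}_s$ expressing that an element and each of its $s$-successors carry the same $g_j$-value. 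Finally, rewrite every $\exists f.=_{c_j}$ as a concept $\Theta_{f,j}$ saying ``$f^\Imc(d)=g_j^\Imc(d)$'', every $\exists rf.=_{c_j}$ as $\exists r.\Theta_{f,j}$, and every $\forall p.=_{c_j}$ as ``$p$ undefined or equal to $g_j$'' (again using the $\top_\cDom$-trick for ``$f$ undefined''). Since the concrete-domain ingredients ($\top_\cDom$ as a disjunction of predicates, and the \JD-formula $\delta$ defining $=$) have fixed size, the construction is polynomial.

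The concepts $\Theta_{f,j}$ and $\mathsf{Prop}^{g_j}_s$ are built from the quantifier-free, equality-free formula $\delta(x,y)$ given by \JD. Putting $\delta$ (for $\Theta$) and $\neg\delta$ (for $\mathsf{Prop}$) into disjunctive normal form and replacing negative literals by disjunctions of positive ones via \JEPD, I obtain a Boolean combination of atoms $P(\bar v)$ with $\bar v$ a tuple over $\{x,y\}$ and $P$ a genuine $\cDom$-predicate; substituting $f$ for $x$ and $g_j$ for $y$ turns each atom into a CD-restriction, and for $\Theta_{f,j}$ this substitution is faithful because $f$ and $g_j$ are both single-valued. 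For $\mathsf{Prop}^{g_j}_s$ the variable $y$ is instead realized by the feature \emph{path} $sg_j$, which is set-valued, so the resulting concept is in general only an over-approximation of the intended ``agreement'' statement; this is harmless, since (i) it still implies agreement along $s$ (needed for soundness of the reduction) and (ii) it is satisfied as soon as $g_j$ is constantly $c_j$, because $\delta(c_j,c_j)$ holds (needed for completeness).

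For correctness I would argue both directions. If $\Omc$ is consistent, take a model $\Imc$, set $g_j^\Imc$ equal to the constant function with value $c_j$ on the whole (non-empty) domain, interpret each $a_j$ by some domain element, and add the fresh $u_j$-edges; then all new axioms hold and $\Theta_{f,j}$ has precisely the extension of $\exists f.=_{c_j}$, so we obtain a model of $\Omc'$. Conversely, if $\Omc'$ is consistent, take any model and pass to the submodel generated by the named individuals (still a model of $\Omc'$, since all constructors are local): every element is then reachable from $a_j^\Imc$ via a directed path of roles of $\Omc'$ (use the $u_j$-edge from $a_j$ to a named individual, then the generating path), and by induction along such a path — base case from the feature assertion $g_j(a_j,c_j)$, inductive step from $\top\sqsubseteq\exists g_j.\top_\cDom$ together with $\top\sqsubseteq\mathsf{Prop}^{g_j}_s$ — we conclude $g_j^\Imc\equiv c_j$; restricting the model to the original signature now yields a model of $\Omc$, because under $g_j\equiv c_j$ every rewritten restriction has the same extension as the restriction it replaced.

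The step I expect to be the main obstacle is precisely making $g_j$ \emph{global}: a feature assertion only pins the value at the named individual $a_j$, whereas the CD-restrictions in $\Tmc$ must be simulated at all elements, including the anonymous ones created by the TBox. This is what forces the two-part gadget (totality of $g_j$ \emph{and} propagation of $g_j$ along every role) and, on the model-theoretic side, the restriction to a submodel generated by the named individuals so that the inductive ``$g_j\equiv c_j$'' argument reaches every element. A secondary subtlety is formulating $\mathsf{Prop}^{g_j}_s$ correctly when the \JD-formula is a non-trivial Boolean combination whose $y$-variable is realized by a set-valued feature path; the over-approximation described above is how I would handle it.
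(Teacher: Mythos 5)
Your reduction is correct, and it follows the same overall strategy as the paper's proof: pin the constant~$c$ by a feature assertion at a named individual, add a totality axiom for an auxiliary feature, add per-role propagation axioms built from the \JD-formula (using \JEPD to eliminate negative literals) so that the auxiliary feature is constantly~$c$ on the part of the model reachable from the ABox, and then rewrite $\exists f.{=_c}$ as ``$f$ agrees with the auxiliary feature''. The differences are in two implementation steps. For propagation, the paper exploits that the \JD-formula $\phi(c,y)$ is equivalent to $c=y$, so a \emph{single} satisfiable disjunct $\alpha(c,y)$ of its DNF is already equivalent to $c=y$; being a conjunction of atoms, it can be translated directly with universal CD-restrictions $\forall p_1,\dots,p_n.P$, since universal quantification distributes over conjunction. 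You instead encode agreement as the negation of the existential translation of the DNF of $\neg\delta$, which yields a concept that is in general stronger than the intended agreement statement; your two observations---that it still implies agreement along the role (soundness), and that it holds whenever the auxiliary feature is constantly~$c_j$, because every disjunct of $\neg\delta$ fails at $(c_j,c_j)$ (completeness)---are exactly the properties needed, and they do hold for this construction. For anchoring and reachability, you use one fresh individual with fresh-role edges to all named individuals and pass to the submodel generated by the named individuals (correct, since all constructors only inspect role successors), whereas the paper asserts $f_c(a,c)$ at every ABox individual and appeals to the forest-shaped models from its soundness/completeness lemmas; both work. Two of your side remarks are genuinely useful additions: the special treatment of $|D|=1$ is what guarantees that every satisfiable disjunct mentions both variables (so your rewriting is faithful when $f$ is undefined), and universal restrictions over~$=_c$ really do need separate handling, since the \JEPD-based elimination of universal CD-restrictions does not apply to an additional singleton predicate. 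Just make sure the generated submodel is taken along the roles occurring in the constructed ontology only, so that your propagation axioms cover every edge used in the induction.
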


If we additionally require~$\cDom$ to be homogeneous, then we can show that arbitrary feature assertions can already be expressed in ordinary \ALCD ontologies.

\begin{restatable}{theorem}{ThmHomogeneous}\label{thm:homogeneous}
  For an \expadm homogeneous concrete domain~$\cDom$, ontology consistency in \ALCD with feature assertions is \ExpTime-complete.
\end{restatable}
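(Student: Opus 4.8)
The plan is to prove the lower bound trivially and to obtain the \ExpTime upper bound by a polynomial reduction to ontology consistency in \ALCD with \emph{predicate} assertions, which is already known to be \ExpTime-complete by \Cref{thm:cd-assertions}. \ExpTime-hardness is inherited from plain \ALC. For the reduction, let $\Omc = \Amc \cup \Tmc$ be an \ALCD ontology whose feature assertions are $f_1(a_1,c_1),\dots,f_n(a_n,c_n)$. Since $\cDom$ has a finite, non-empty signature, all its predicates have arity at most some constant. Define $\Omc'$ by deleting every feature assertion from $\Omc$ and adding, for each $k$-ary predicate~$P$ of~$\cDom$ and each tuple $(j_1,\dots,j_k)\in\{1,\dots,n\}^k$ with $(c_{j_1},\dots,c_{j_k})\in P^D$, the predicate assertion $P\big(f_{j_1}(a_{j_1}),\dots,f_{j_k}(a_{j_k})\big)$. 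Because the arity is bounded by a constant and the signature is finite, only polynomially many such assertions are added, so $\Omc'$ is computable in polynomial time (this presupposes that membership in the predicates of~$\cDom$ is decidable, which holds for all known $\omega$-admissible domains — e.g.\ it amounts to comparing rationals for~$\stru{Q}$). Intuitively, the new assertions force the values of the relevant features to realize in~$\cDom$ exactly the complete type of the tuple $(c_1,\dots,c_n)$.

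The core of the argument is to show that $\Omc$ (with its feature assertions) is consistent iff $\Omc'$ is. The forward direction is immediate: in a model~$\Imc$ of~$\Omc$ we have $f_j^\Imc(a_j^\Imc)=c_j$ for all~$j$, so every added predicate assertion holds by construction and $\Imc\models\Omc'$. For the converse, let $\Imc\models\Omc'$ and put $v_j := f_j^\Imc(a_j^\Imc)$; each $v_j$ is defined because~$\cDom$ has at least one predicate~$P$, say of arity~$k_0$, and then $\Omc'$ contains $P(f_j(a_j),\dots,f_j(a_j))$. By \JEPD every tuple of the~$v_j$ lies in exactly one predicate of~$\cDom$, and the added assertions force it to be the predicate containing the corresponding tuple of~$c_j$'s; hence, under the correspondence $v_j\mapsto c_j$, the finite substructures of~$\cDom$ induced by $\{v_1,\dots,v_n\}$ and by $\{c_1,\dots,c_n\}$ satisfy the same atomic formulas over the signature of~$\cDom$. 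By \JD, equality in~$\cDom$ is defined by a quantifier-free, equality-free first-order formula over that signature, so, since all atomic facts match, $v_j=v_{j'}$ iff $c_j=c_{j'}$; thus $v_j\mapsto c_j$ is a well-defined isomorphism between the two finite substructures. Since~$\cDom$ is homogeneous, it extends to an automorphism~$\pi$ of~$\cDom$. Let~$\Imc'$ agree with~$\Imc$ on all individual, concept, and role names, but set $f^{\Imc'}(d):=\pi\big(f^\Imc(d)\big)$ whenever $f^\Imc(d)$ is defined. Automorphisms of~$\cDom$ preserve all of its predicates, so the extension of every concept (in particular every CD-restriction) and the satisfaction of every predicate assertion are unchanged when passing from~$\Imc$ to~$\Imc'$; hence $\Imc'\models\Omc'$. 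Moreover $f_j^{\Imc'}(a_j^\Imc)=\pi(v_j)=c_j$ for all~$j$, so~$\Imc'$ also satisfies the feature assertions of~$\Omc$, and therefore $\Imc'\models\Omc$.

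Combining the polynomial reduction with \Cref{thm:cd-assertions} yields the \ExpTime upper bound, and together with the \ExpTime-hardness of \ALC we obtain \ExpTime-completeness. I expect the main obstacle to be the converse direction of the reduction's correctness: one must check carefully that the complete type of the feature values is genuinely pinned down by the polynomially many predicate assertions, that \JD forces the equalities among the~$v_j$ to match those among the~$c_j$ (so that $v_j\mapsto c_j$ is a partial isomorphism of~$\cDom$, not merely a type-preserving map), and that homogeneity can then be invoked to transform an arbitrary model of~$\Omc'$ into one that satisfies the original feature assertions exactly. A minor point to dispatch is definedness of $f_j^\Imc(a_j^\Imc)$, which is guaranteed by the diagonal predicate assertions obtained from the non-empty signature.
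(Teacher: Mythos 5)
Your proposal is correct and follows essentially the same route as the paper: a polynomial reduction to consistency with predicate assertions (\Cref{thm:cd-assertions}) by asserting all predicates realized by the constants, and for the converse using \JEPD and \JD to show that $v_j\mapsto c_j$ is an isomorphism of finite substructures, which homogeneity extends to an automorphism used to reinterpret the features. Your explicit treatment of definedness of $f_j^\Imc(a_j^\Imc)$ via the diagonal assertions is a welcome detail the paper leaves implicit.
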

\begin{proof}
  Due to Theorem~\ref{thm:cd-assertions}, it suffices to provide a reduction to ontology consistency with predicate assertions.
  Let $\Omc = \Amc \cup \Tmc$ be an \ALCD ontology containing feature assertions.
  Let $\Amc'$ be the ABox containing all concept and role assertions from~$\Amc$ and, in addition, all predicate assertions $P(f_1(a_1),\dotsc,f_k(a_k)) \in \Amc'$ for all combinations of feature assertions $f_i(a_i,c_i) \in \Amc$ with $i = 1,\dotsc,k$ for which $(c_1,\dotsc,c_k) \in P^D$ holds.
  The size of $\Amc'$ is polynomial in the input, since the signature of~$\cDom$ is fixed.

  It is easy to see that every model of $\Omc$ is also a model of $\Omc' := \Amc' \cup \Tmc$.
  Conversely, let \Imc be a model of $\Omc'$ and let $\cDom_\Amc$, $\cDom_\Imc$ be the finite substructures of $\cDom$ over the domains
  \begin{center}
    $D_\Amc := \{ c \mid f(a,c) \in \Amc \}$ and $D_\Imc := \{ f^{\Imc}(a) \mid f(a,c) \in \Amc \}$,
  \end{center}
  respectively.
  By definition of~$\Amc'$ and \JEPD, we know that $\big(f_1^{\Imc}(a_1^\Imc),\dotsc,f_k^{\Imc}(a_k^\Imc)\big) \in P^D$ iff $(c_1,\dotsc,c_k) \in P^D$, for all combinations of feature assertions $f_i(a_i,c_i)$ in~\Amc.
  By~\JD, this in particular implies that $f_1^\Imc(a_1^\Imc)=f_2^\Imc(a_2^\Imc)$ iff $f_1(a_1,c),f_2(a_2,c)\in \Amc$ for some value $c\in D$, which means that the two substructures have the same number of elements.
  Moreover, by the first equivalence, the mapping $f^{\Imc}(a^\Imc) \mapsto c$ for all $f(a,c) \in \Amc$ is an isomorphism between $\cDom_\Imc$ and $\cDom_\Amc$.
  Since $\cDom$ is homogeneous, there exists an isomorphism $h \colon D \to D$ such that $h(f^{\Imc}(a^\Imc)) = c$ if $f(a,c) \in \Amc$.
  Consequently, we define $\Imc'$ from \Imc by changing the interpretation of feature names to $f^{\Imc'}(d) := h(f^{\Imc}(d))$ iff this value is defined for $f \in \NF$ and $d \in \Delta^{\Imc}$.
  Since $h$ is an isomorphism, we have $C^\Imc=C^{\Imc'}$ for all concepts~$C$, including CD restrictions, which shows that $\Imc'$ is a model of~\Tmc and all concept and role assertions in~\Amc.
  Moreover, it also satisfies all feature assertions $f(a,c)\in\Amc$ since $f^{\Imc'}(a^{\Imc'})=h(f^\Imc(a^\Imc))=c$ by construction.
\end{proof}

Together with~\Cref{theo:feature-assertions}, this also shows that one can use arbitrary singleton equality predicates in \ALCD ontologies over a homogeneous concrete domain~$\cDom$, even if~$\cDom$ contains only finitely many singleton predicates (or none).

\section{Conclusion}
\label{sec:conclusion}

In this paper, we revisited the problem of reasoning in \ALCD with an $\omega$-admissible concrete domain~$\stru{D}$, first addressed in~\cite{LuMi07}.
There, it was conjectured that concept satisfiability w.r.t.\ a TBox is \ExpTime-complete, provided that $\CSP(\stru{D})$ is decidable in exponential time.
Using an approach based on type elimination, we successfully proved this conjecture.
In addition, we integrated ABox reasoning and showed that reasoning w.r.t.\ an \ALCD ontology where one can refer to specific values via feature assertions is also \ExpTime-complete, if in addition to the above $\stru{D}$ is an $\omega$-admissible homogeneous structure.
The main examples of $\omega$-admissible concrete domains from the literature fulfill this requirement as they are (reducts of) finitely bounded homogeneous structures~\cite{BaRy22}, and so we obtain insights into the complexity of reasoning with extensions of \ALC by concrete domain restrictions ranging over Allen's interval algebra, the region connection calculus RCC8, the rational numbers with ordering and equality, and disjoint combinations of those domains.

By extending the type elimination algorithm proposed in this paper appropriately, we believe that it is possible to show that decidability is preserved in extensions of \ALCD such as $\mathcal{ALCI}(\stru{D})$, where inverse roles are allowed in both role and concrete domain restrictions (so that we can write, e.g.\ $\forall \ex{hasChild}^{-}\,\ex{age},\ex{hasChild}\,\ex{age}.{>}$), and $\mathcal{ALC\!Q}(\stru{D})$, which supports qualified number restrictions.
We also plan to use this type elimination approach as a starting point in our investigations of different inference problems, for instance signature-based abduction~\cite{DBLP:conf/ijcai/Koopmann21} for \ALCD ontologies and abstract definability for \ALCD TBoxes, i.e. checking whether their \emph{abstract expressive power}~\cite{BaBo-SAC-24} can be defined in \ALC.

\begin{acknowledgments}
  This work was partially supported by DFG grant 389792660 as part of \href{https://perspicuous-computing.science}{TRR~248 -- CPEC}, by the German Federal Ministry of Education and Research (BMBF, SCADS22B) and the Saxon State Ministry for Science, Culture and Tourism (SMWK) by funding the competence center for Big Data and AI "\href{https://scads.ai}{ScaDS.AI Dresden/Leipzig}".
  The authors would like to thank Jakub Rydval for his help in understanding the properties of finitely bounded homogeneous structures.
\end{acknowledgments}

\clearpage
\appendix

\section{Omitted Proofs for \Cref{sec:satisfiability}}
\label{sec:appendix}

\FirstClaim*
\begin{proof}
  We prove the claim by induction over $m \in \mathbb{N}$.
  For the base case $m = 0$, observe that $\cs^0$ is equal to $\cs_\Amc^0$ since $\cs_\Amc^0$ already contains all local systems of the form $\cs_{a,w_a}$ (see Definition~\ref{dfn:abox-type}).
  Since $\cs_\Amc^0$ is equal to~$\cs_\Amc$ up to renaming of variables, the fact that $\cs_\Amc$ has a solution implies that $\cs^0=\cs_\Amc^0$ has a solution as well.

  For the inductive step, we assume that $\cs^m$ has a solution and show how to extend it to a solution of $\cs^{m+1}$.
  We begin by observing that any constraint system~$\cs$ that has a solution~$h$ can be extended to a complete constraint system by using~$h$ to add any missing constraints; i.e.\ if there is no constraint $P(v_1,\dots,v_k)$ for $v_1,\dots,v_k\in V(\cs)$, but $\cDom$ has $k$-ary predicates, then we can complete $\cs$ by adding the unique $P(v_1,\dots,v_k)$ for which $(h(v_1),\dots,h(v_k))\in P^D$ (cf.\ \JEPD).
  Moreover, this complete constraint system also has~$h$ as a solution.
  Since $\cs^m$ has a solution, let now $\ocs$ be the satisfiable, complete system obtained by extending $\cs^m$ in this way.
  
  Let $(a,w) \in \Delta^{m+1} \setminus \Delta^m$.
  By construction, there is a unique non-empty word $w' \in \Delta^m$ and a symbol $(\agt,i) \in \Sigma$ such that $w = w' \cdot (\agt,i)$.
  We notice that $\ocs$ and $\cs_{a,w}$ are complete systems that agree on the constraints over their shared variables $V(\ocs)\cap V(\cs_{a,w})$.
  This holds since all constraints over the shared variables (which are all of the form $f^{a,w}$) must occur in~$\ocs$ inside~$\cs_{a,w'}$; for the case of $w'=w_a$, this is because all constraints from $\cs_{a,w_a}$ must occur in $\cs_\Amc^0$ by Definition~\ref{dfn:abox-type}, and $\cs_{a,w_a}$ is complete.
  Both $\wend(w')$ and $\agt$ belong to $\mathbb{T}$, and $\agt$ patches $\wend(w')$ at $i$, thus $\cs_{a,w'} \cup \cs_{a,w}$ has a solution (cf.\ Definition~\ref{dfn:augmented-type}).
  In particular, the relations over the concrete domain $\cDom$ satisfy \JEPD, hence there cannot be a tuple of variables $v_1,\dotsc,v_k$ such that $P(v_1,\dotsc,v_k) \in \cs_{a,w}$ and $P'(v_1,\dotsc,v_k) \in \cs_{a,w'}\subseteq \ocs$ with $P \ne P'$.
  Since $\ocs$ and $\cs_{a,w}$ are complete, agree on the constraints over their shared variables, and both have a solution ($\ocs$ by inductive hypothesis, and $\cs_{a,w}$ because $\agt \in \mathbb{T}$), property \AP implies that $\ocs\cup \cs_{a,w}$ has a solution, which we can use to extend $\ocs$ to a complete constraint that includes~$\cs_{a,w}$.
  
  We can repeat this process for every $(a,w) \in \Delta^{m+1} \setminus \Delta^m$, because the different constraint systems $\cs_{a,w}$ do not share variables, and thus we obtain a constraint system $\ocs'$ that is complete, has a solution, and includes $\cs^{m+1}$.
  Therefore, we conclude that $\cs^{m+1}$ has a solution.
\end{proof}
\SecondClaim*
\begin{proof}
  We prove this claim by structural induction over $C \in \Mmc$.
  We first prove the two base cases where $C$ is either a concept name or an existential CD-restriction.
  \begin{itemize}
    \item The case $C = A$ is trivially covered by the definition of $A^{\Imc}$.
    \item Let $C = \exists p_1,\dotsc,p_k. P \in \Mmc$.
    If $C \in \agroot(\agt)$, by definition of local system and because of $\agt \in \mathbb{T}$ we are able to find a constraint $P(f_1^{j_1},\dotsc,f_k^{j_k}) \in \cs_{\agt}$ such that for each $i = 1,\dotsc,k$
    \begin{itemize}
      \item if $p_i = f_i$, then $j_i = 0$;
      \item if $p_i = r_i f_i$, then $j_i \in \agsucc_{\agt}(r_i)$ and there exists $\mathfrak{u}^i \in \mathbb{T}$ that patches $\agt$ at $j_i$.
    \end{itemize}
    Using these indices and augmented types, we define the words
    \begin{equation*} w^i :=
      \begin{cases}
        w & \text{if $p_i = f_i$}, \\
        w \cdot (\mathfrak{u}^i, j_i) & \text{if $p_i = r_i f_i$}.
      \end{cases}
    \end{equation*}
    for $i = 1,\dotsc,k$.
    It follows that $P(f_1^{a,w^1},\dotsc,f_k^{a,w^k}) \in \cs^{\Imc}$, which implies that $\big(f_1^{\Imc}(a,w^1),\dotsc,f_k^{\Imc}(a,w^k)\big) \in P^D$ by definition of \Imc.
    Due to the construction of~$w^i$, we know that $f_i^{\Imc}(a,w^i) \in p_i^{\Imc}(a,w)$ holds for $i = 1,\dotsc,k$, which allows us to conclude that $(a,w) \in C^{\Imc}$.
    
    Vice versa, assume that $C \notin \agroot(\agt)$.
    By Definition~\ref{dfn:type}, this means that $\lnot C \in \agroot(\agt)$.
    For every tuple of values $c_i \in p_i^{\Imc}(w)$, $i = 1,\dotsc,k$, we have to show that $(c_1,\dots,c_k)\notin P^D$.
    
    We first consider the case that $w=w_a$ for some $a\in \NI(\Amc)$, and find individual names~$a^i$ and words~$w^i$ such that the variable $f^{a^i,w^i}$ reflects the origin of the value~$c_i$, as follows.
    By construction of~\Imc, one of the following three cases must hold for each $i\in\{1,\dots,k\}$.
    \begin{itemize}
      \item If $p_i=f_i$ and $c_i=f_i^\Imc(a,w_a)=h^\Imc(f_i^{a,w_a})$, then we set $a^i:=a$ and $w^i:=w_a$.
      \item If $p_i=r_if_i$ and $c_i=f_i^\Imc(b,w_b)=h^\Imc(f_i^{b,w_b})$ for some individual name $b$ with $r_i(a,b)\in\Amc_R$, then we set $a^i:=b$ and $w^i:=w_b$.
      \item If $p_i=r_if_i$ and $c_i=f_i^\Imc(a,w')=h^\Imc(f_i^{a,w'})$ for some $r_i$-successor $(a,w')$ of $(a,w)$ with $w'=w\cdot(\agt',j)$ and $j\in\agsucc_{\agt}(r_i)$, then we set $a^i:=a$ and $w^i:=w'$.
    \end{itemize}
    By Definition~\ref{dfn:abox-type} and the fact that $\lnot C \in \agroot(\agt)$, we know that $P\big(f_1^{a^1,w^1},\dots,f_k^{a^k,w^k}\big)$ cannot be contained in~$\cs_\Amc^0$.
    However, by construction of~\Imc, the fact that the corresponding feature values are defined means that the variables $f_1^{a^1,w^1},\dots,f_k^{a^k,w^k}$ must all occur in~$\cs_\Amc^0$.
    Hence, by completeness of $\cs_\Amc^0$, then, $P'\big(f_1^{a^1,w^1},\dots,f_k^{a^k,w^k}\big) \in \cs_\Amc^0\subseteq \cs^\Imc$ for some $k$-ary predicate~$P'$ disjoint with~$P$.
    This means that $(c_1,\dotsc,c_k) = \big(h^\Imc(f_1^{a^1,w^1}),\dots,h^\Imc(f_k^{a^k,w^k})\big)\in (P')^D$, and thus $(c_1,\dotsc,c_k) \notin P^D$.
    This allows us to conclude that $(a,w_a) \notin C^{\Imc}$.
    
    For the case that $w\neq w_a$ for all $a\in \NI(\Amc)$, we can use similar, but simpler, arguments, based on~$\cs_w$ (Definition~\ref{dfn:local-system}) instead of~$\cs_\Amc^0$ (Definition~\ref{dfn:abox-type}).
  \end{itemize}
  Let us now assume that the claim holds for $D,C_1,C_2 \in \Mmc$ and prove the inductive cases.
  \begin{itemize}
    \item If $C = \neg D \in \Mmc$, then $C \in \agroot(\agt)$ iff $D \notin \agroot(\agt)$ iff $(a,w) \notin D^{\Imc}$ iff $(a,w) \in C^{\Imc}$, where the equivalences hold due to Definition~\ref{dfn:type}, the inductive hypothesis, and the semantics of negation, respectively.
    \item If $C = C_1 \sqcap C_2 \in \Mmc$, then $C \in \agroot(\agt)$ iff $C_1 \in \agroot(\agt)$ and $C_2 \in \agroot(\agt)$ iff $(a,w) \in C_1^{\Imc}$ and $(a,w) \in C_2^{\Imc}$ iff $(a,w) \in C^{\Imc}$, where the equivalences hold similarly to the case of concept negation.
    \item If $C = \exists r. D \in \Mmc$, then $C \in \agroot(\agt)$ and $\agt \in \mathbb{T}$ imply that there are $i \in \agsucc_{\agt}(r)$ and an augmented type $\agt'\in\mathbb{T}$ that patches $\agt$ at $i$ such that $((a,w),(a,w')) \in r^{\Imc}$ with $w' = w \cdot (\agt', i)$ and $D \in \agroot(\agt')$.
    By definition of \Imc and inductive hypothesis, we deduce that $(a,w') \in D^{\Imc}$, which in turn implies $(a,w) \in C^{\Imc}$.

    Vice versa, assume that $C \notin \agroot(\agt)$, and thus $\neg C \in \agroot(\agt)$.
    Any $r$-successor~$e$ of $(a,w)$ must be of one of the following forms.
    \begin{itemize}
      \item If $w=w_a=(\agt_a,0)$, $e=(b,w_b)$, and $r(a,b)\in\Amc_R$, then $\lnot C \in \agroot(\agt_a)$ implies that $D \notin \agroot(\agt_b)$ by Definition~\ref{dfn:abox-type}. Since $\wend(w_b)=\agt_b$, we thus obtain $e \notin D^\Imc$ by inductive hypothesis.
      \item If $e=(a,w')$ and $\wend(w') = \agt'$ patches some $i \in \agsucc_{\agt}(r)$, then $\neg C \in \agroot(\agt)$ implies that $D \notin \agroot(\agt')$ (see Definition~\ref{dfn:augmented-type}). By inductive hypothesis, we conclude that $e \notin D^\Imc$.
    \end{itemize}
    This shows that no $r$-successor of $(a,w)$ can satisfy~$D$ in~\Imc, and thus $(a,w) \notin C^\Imc$. \qedhere
  \end{itemize}
\end{proof}

\ThirdClaim*
\begin{proof}
  We notice that $\cs_d$ is complete and that $\cs_d$ and $\agsucc_d$ satisfy all the conditions stated in~\Cref{dfn:local-system} and thus constitute a local system.
  In addition, the mapping $v_d(f^i):=f^{\Imc}(d_i)$ is a solution of~$\cs_d$, and thanks to our choice of types $t_0,\dotsc,t_{\nt}$ we also obtain that the augmented type $\agt_{\Imc}(d)$ constructed using this process is locally realizable.

  Moreover, for a given augmented type $\agt=\agt_{\Imc}(d)$ in $\mathbb{T}_{\Imc}$ and $i \in \{0,\dotsc,\nt\}$, our construction yields that $\agt'=\agt_{\Imc}(d_i)$ patches $\agt$ at $i$; here, $d_i$ is the domain element chosen for $t_i$ in the construction of $\agt_{\Imc}(d)$.
  The first condition, namely that $\agroot(\agt') = t_i$, is fulfilled by construction.
  The second condition, i.e.\ that $\cs_{\agt} \merge{i} \cs_{\agt'}$ has a solution, follows from the fact that the individual solutions $v_d$, $v_{d_i}$ constructed above agree on the values of the shared variables $v_d(f^i)=f^{\Imc}(d_i)=v_{d_i}({f^0})$.
  Together with the conditions proved during the construction of the augmented types, this ensures that $\agt$ is patched by $\mathbb{T}_{\Imc}$.
\end{proof}

\FourthClaim*
\begin{proof}
  We show that $\agt^\Imc_\Amc$ satisfies all conditions stated in~\Cref{dfn:abox-type}.
  If $C(a) \in \Amc$ then $a^{\Imc} \in C^{\Imc}$ and since $\agroot(\agt_a) = t_{\Imc}(a^{\Imc})$, we deduce that $C \in \agroot(\agt_a)$.
  Similarly, if $r(a,b) \in \Amc$ then $(a^{\Imc},b^{\Imc}) \in r^{\Imc}$ holds, which by definition of $\Amc_R$ implies that $r(a,b) \in \Amc_R$.
  If $\neg \exists r. C \in \agroot(\agt_a)$ and $r(a,b) \in \Amc_R$, the fact that $b$ is an $r$-successor of $a$ in \Imc clearly implies that $C \notin \agroot(\agt_b) = t_{\Imc}(b^{\Imc})$ must hold.
  We turn our attention to $\cs_{\Amc}$.
  If $P(f_1^{j_1},\dotsc,f_k^{j_k}) \in \cs_{\agt_a}$, then $(f_1^{\Imc}(d_{a,j_1}),\dotsc,f_k^{\Imc}(d_{a,j_k})) \in P^D$, and by definition of~$\cs_{\Amc}$ we obtain $P(f_1^{a_1,i_1},\dotsc,f_k^{a_k,i_k}) \in \cs_\Amc$.
  Next, assume that $a \in \NI(\Amc)$ and that $P(v_1,\dotsc,v_k) \in \cs_{\Amc}$ holds, where each variable $v_i$ is of one of the forms described in~\Cref{dfn:abox-type} for $a$.
  Then, by definition of $\cs_{\Amc}$ we can find values $c_i \in p_i^{\Imc}(a^{\Imc})$ for $i = 1,\dotsc,k$ such that $(c_1,\dotsc,c_k) \in P^D$.
  Therefore, it follows that $\exists p_1,\dotsc,p_k. P \in \agroot(\agt_a)$.
  At last, we observe that $\cs_\Amc$ has a solution, given by the interpretation of feature names over \Imc.
  We conclude that $\agt^\Imc_\Amc$ is an ABox type for \Omc.
\end{proof}

\ThmConsistencyExpTime*
\begin{proof}
	It only remains to show that \Cref{alg:elimination} runs in exponential time. 
	Each augmented type $\agt$ contains $n_\Omc+1$ types, each of size polynomial in the input ontology. Moreover, fixing the types in $\agt$, there are at most $n_{\cDom} \cdot \lvert V_\ell\rvert^{\nar}$ distinct local systems, where $n_\cDom$ is the number of predicates in $\cDom$ and $V_\ell = \nt \cdot \lvert\NF\rvert$ the maximal number of variables. It follows that the number of augmented types is at most exponential, and thus the loop in \Cref{a:loop} takes at most exponentially many iterations. In each iteration, we need to test whether some $\agt\in\mathbb{T}$ is not locally realizable or not patched, which amounts to a number of tests polynomial in $\lvert\mathbb{T}\lvert$, and thus exponential in the size of $\Omc$. Each such test involves checking the satisfiability of complete constraint systems $\cs_{\agt}$ and merged systems $\cs_{\agt} \merge{i} \cs_{\agt'}$. Since $\cDom$ is fixed, each (merged) system is of polynomial size. Since CSP(\cDom) is in \ExpTime, each satisfiability test therefore takes at most exponential time. It follows that the loop in Line~\ref{a:loop} takes at most exponential time in total. It remains to show that also the search for ABox types in Line~\ref{a:abox} takes at most exponential time. For this, it suffices to observe that there can be at most exponentially many ABox types for $\Omc$, which each have a polynomial-size constraint system attached. This is not hard to see from \Cref{dfn:abox-type}, since there is one augmented type for every individual in $\Amc$, the set of possible role assertions in $\Amc_R$ is polynomial in $\Omc$, and the set of variables in $\mathfrak{C}_\Amc$ is polynomial in the size of the involved augmented types. 
\end{proof}

\section{Omitted Proofs for \Cref{sec:assertions}}

\ThmFeatureAssertions*
\begin{proof}
	Let $\Omc$ be an $\ALC(\cDom)$ ontology. We show how to replace all occurrences of an additional singleton predicate~$=_c$ using feature assertions. We first replace every CD-restriction $\exists rf.{=_c}$ by the equivalent concept $\exists r.\exists f.{=_c}$, which means that $=_c$ can occur only in CD-restrictions of the form $\exists f.{=_c}$ with feature paths of length~$1$. The idea is now to store the value~$c$ in a special feature~$f_c$ by using feature assertions in the ABox, and make sure that the value of~$f_c$ is equal to~$c$ at every element reachable from an ABox individual by a role chain. We can then express the concept $\exists f.{=_c}$ by making $f$ equal to $f_c$, for which we exploit \JD.
	
	First, we ensure that $f_c$ is a function by the axiom $\top\sqsubseteq\exists f_c.\top_\cDom$, where $\top_\cDom$ is expressed as in the proof of \Cref{thm:cd-assertions}. 
	We next give $f_c$ the value $c$ for all elements reachable from the ABox. If $\Omc$ does not contain an ABox, we introduce a fresh individual name $a$. For every individual name $a$ occurring in the ABox, or for that fresh individual name, we	add the feature assertion $f_c(a,c)$.
	By \JD, equality between two variables $x$, $y$ can be expressed using a quantifier-free formula $\phi(x,y)$ over the signature of $\cDom$ (i.e.\ not including the additional singleton predicates). By \JEPD and finiteness of the signature, we can express negated atoms as disjunctions of positive atoms, so that we may assume $\phi(x,y)$ to be in DNF. Now consider the formula $\phi(c,y)$, where $x$ is replaced by the constant~$c$. Since $\phi(c,y)$ is equivalent to $c=y$, we can find a single disjunct $\alpha(c,y)$ of $\phi(c,y)$ such that $\alpha(c,y)$ is satisfiable and equivalent to $c=y$; otherwise, $\phi(c,y)$ would be satisfied also for values of~$y$ other than~$c$. For every $r\in\NR(\Omc)$, we now obtain the concept~$C_{c,r}$ from $\alpha(c,y)$ by replacing $\wedge$ with $\sqcap$ and every atom $P(t_1,\ldots,t_n)$ with $\forall p_1,\ldots,p_n.P$, where $p_i=f_c$ if $t_i=c$, and $p_i=rf_c$ if $t_i=y$, and add the axiom $\top\sqsubseteq C_{c,r}$ to the TBox. This ensures that, in every model $\Imc$ of the resulting ontology, for all elements $d$ that are reachable from the ABox, we have that $f_c^\Imc(d)=c$. We do not need to consider other elements, since one can show similarly to the proofs of \Cref{lem:soundness,lem:completeness} that, if $\Omc$ is consistent, then it has a forest-shaped model in which every element is reachable from some element in the ABox.
	
	We can now replace every concept of the form $\exists f.{=_c}$ with a concept $C_{c,f}$ that is obtained from $\alpha(c,y)$ by replacing $\land$ with $\sqcap$ and atoms $P(t_1,\ldots,t_n)$ with $\exists f_1,\ldots,f_n.P$, where $f_i=f_c$ if $t_i=c$, and $f_i=f$ if $t_i=y$. 
	Doing this transformation for all singleton predicates~$=_c$ in~$\Omc$ leads to the required equi-consistent ontology~$\Omc'$ without singleton predicates.
\end{proof}

\end{document}